\newtheorem{theorem}{Theorem}
\newtheorem{proposition}[theorem]{Proposition}
\newtheorem{lemma}[theorem]{Lemma}
\newtheorem{corollary}[theorem]{Corollary}
\theoremstyle{definition}
\newtheorem{definition}[theorem]{Definition}
\newtheorem{remark}[theorem]{Remark}
\newcommand\dashedph[1][H]{\setlength{\fboxsep}{0pt}\setlength{\dashlength}{2.2pt}\setlength{\dashdash}{1.1pt} \dbox{\phantom{#1}}}
\renewcommand{\lor}{\vee}
\renewcommand{\land}{\wedge}
\renewcommand{\lnot}{\neg}
\newcommand{\limp}{\rightarrow}
\newcommand{\bor}{\vee}
\newcommand{\band}{\wedge}
\newcommand{\bnot}{\neg}
\newcommand{\cc}{{\operatorname{c\kern-.05emc}}}
\renewcommand{\sc}{\mathbin{\cdot}}
\newcommand{\st}{\mathbin{\star}}
\newcommand{\cln}{\,\colon}
\newcommand{\cldotsc}{, \allowbreak \ldots, \allowbreak }
\newcommand{\multi}[1]{\overline{#1}}
\newcommand\pole{{\protect\mathpalette{\protect\polehelper}{\bot}}} \def\polehelper#1#2{\mathrel{\rlap{$#1#2$}\mkern3mu{#1#2}}}
\newcommand{\falsity}[1]{\left\lVert#1\right\rVert}
\newcommand{\truth}[1]{\left\lvert#1\right\rvert}
\newcommand{\instrU}{\zeta}
\newcommand{\instrR}{\eta}
\renewcommand{\th}[1]{\operatorname{Th}(#1)}
\newcommand{\ds}[1]{\llbracket #1 \rrbracket}
\begin{document}
%
% paper title
% Titles are generally capitalized except for words such as a, an, and, as,
% at, but, by, for, in, nor, of, on, or, the, to and up, which are usually
% not capitalized unless they are the first or last word of the title.
% Linebreaks \\ can be used within to get better formatting as desired.
% Do not put math or special symbols in the title.
\title{A first-order completeness result about characteristic Boolean algebras in classical realizability}

% author names and affiliations
% use a multiple column layout for up to three different
% affiliations

\author{Guillaume Geoffroy}
\email{guillaume.geoffroy@irif.fr}
\affiliation{%
  \institution{Université Paris Cité, IRIF}
  \city{Paris}
  \country{France}
}

% As a general rule, do not put math, special symbols or citations
% in the abstract
\begin{abstract}
We prove the following completeness result about classical realizability: given any Boolean algebra with at least two elements, there exists a Krivine-style classical realizability model whose characteristic Boolean algebra is elementarily equivalent to it. This is done by controlling precisely which combinations of so-called ``angelic'' (or ``may'') and ``demonic'' (or ``must'') nondeterminism exist in the underlying model of computation.
\end{abstract}

% make the title area
\maketitle

\section{Introduction}

\paragraph*{Classical realizability} Realizability is an aspect of the propositions-as-types / proofs-as-programs correspondence in which each proposition is interpreted as a specification on the behaviour of programs: programs which satisfy this specification are said to \emph{realize} the proposition. This interpretation defines a notion of truth value: a proposition counts as ``true'' if it is realized by a well-formed program. In particular, any provable proposition is true in that sense. Indeed, any proof, when be seen as a program through the correspondence, must realize the proposition that it proves. This fundamental result ensures that realizability is compatible with logical deduction.

Initially, this compatibility was restricted to intuitionistic deduction. Griffin's discovery of a link between control operators and classical reasoning \cite{griffin:callcc} overcame this limitation. More precisely, Griffin proved that Peirce's law--a deductive principle that is valid in classical logic but not in intuitionistic logic--can be used as a specification (i.e. as a type) for Scheme's operator \emph{call/cc} (``call with current continuation''), which allows a program to manipulate its own evaluation context as a first-class object.

Using this idea, Krivine developed a framework which could interpret all classical reasoning, first within second-order arithmetic \cite{krivine:depchoice}, and then within Zermelo–-Fr\ae{}nkel set theory with dependent choice \cite{krivine:ra1}. Miquel then adapted this \emph{classical realizability} to higher-order arithmetic and explored its connections with forcing \cite{miquel:lics11}. Work on interpreting reasoning that uses the full axiom of choice is ongoing \cite{krivine:model-full-choice}.

\paragraph*{Characteristic Boolean algebras}

Abstractly, a \emph{classical realizability model} is the data of a \emph{model of computation} (for example: a variant of the lambda-calculus enriched with the instruction \emph{call/cc}), a \emph{model of deduction} (for example: a first-order language, plus the rules of classical reasoning, and optionally a theory on this language, i.e. a set of axioms), and a \emph{realizability relation} between \emph{programs} (from the former) and \emph{propositions} from the latter.

Each classical realizability model contains a \emph{characteristic Boolean algebra} (which Krivine calls $\gimel 2$--``gimel $2$''). More precisely, each formula $A$ in the language of Boolean algebras can be translated into a proposition which is usually denoted by $\gimel 2 \models A$ -- read ``the characteristic Boolean algebra satisfies $A$''.

In any given classical realizability model, the set of all first-order formulas $A$ such that the proposition ``the characteristic Boolean algebra satisfies $A$'' is realized by a well-formed program (i.e. ``true'') forms a first-order theory on the language of Boolean algebras: this is called the \emph{first-order theory of the characteristic Boolean algebra of the realizability model}. This theory may or may not be consistent, but it is always closed under classical deduction, and it always contains the theory of Boolean algebras with at least two elements.

The characteristic Boolean algebra, and in particular the ability to ``shape'' it, plays a central role in classical realizability. For example, consider Krivine's \emph{model of threads} \cite{krivine:ra2}. One of its remarkable combinatorial properties is that in this model, there is a whole atomless Boolean algebra embedded in the poset of the cardinalities between the countable and the continuum; and the way this property was obtained was by first making the characteristic Boolean algebra is atomless, and then embedding it in this poset. As an other example, Krivine's construction of a particular classical realizability model that satisfies the axiom of choice \cite{krivine:model-full-choice} depends crucially on the ability to reliably force a realizability model's characteristic Boolean algebra to be isomorphic to any given \emph{finite} Boolean algebra with at least $2$ elements (in that case, the Boolean algebra with $4$ elements).

\paragraph*{Contribution}

The contribution of this paper is to prove that the characteristic Boolean algebra can in fact be made elementarily equivalent to \emph{any} given Boolean algebra with at least two elements, finite or not. More precisely, we prove that for each first-order theory $\mathcal T$ over the language of Boolean algebras, the following two conditions are equivalent:
\begin{itemize}
\item The theory $\mathcal T$ is closed under classical deduction and contains the theory of Boolean algebras with at least two elements;
\item There exists a classical realizability model whose characteristic Boolean algebra's theory is exactly $\mathcal T$.
\end{itemize}
Note that this fact holds independently from the consistency of the theory $\mathcal T$.

In particular, given a first-order formula $A$ over the language of Boolean algebras, the proposition ``the characteristic Boolean algebra satisfies $A$'' is universally realized (i.e. realized in all models) if and only if $A$ is true in all Boolean algebras with at least two elements.

The proof we give is constructive: given a theory $\mathcal T$, we describe a concrete realizability model whose characteristic Boolean algebra's theory is $\mathcal T$. The construction works as follows: it has been pointed out \cite{geoffroy:nondeterminism} that the properties of the characteristic Boolean algebra reflect the kinds of nondeterminism that exist in the underlying computational model; so for each formula $A$ in $\mathcal T$, what we do is add to the computational model a nondeterministic instruction $\gamma_A$ which has exactly the right combination of so-called ``angelic'' (or ``may'') and ``demonic'' (or ``must'') nondeterminism to realize the proposition ``the characteristic Boolean algebra satisfies $A$''.

\paragraph*{Outline}

Section \ref{section:conventions} states well-known facts about classical realizability (including the fact that in the equivalence we want to prove, the second condition implies the first), and lays down the conventions that will be used throughout the paper. To keep the discussion focused and the notations simple, we restrict the language of propositions to the first-order language of Boolean algebras (rather than, say, the language of set theory, or the second-order language of Peano arithmetic). The main benefit is that, in this context, given any first-order formula $A$ in the language of Boolean algebras, the proposition ``the characteristic Boolean algebra satisfies $A$'' is simply the formula $A$: no translation is needed.

Section \ref{section:construction} details the construction, given any first-order theory $\mathcal T$ that is closed under classical deduction and contains the theory of Boolean algebras with at least two elements, of a realizability model that satisfies $\mathcal T$.

Section \ref{section:contains-T} proves that this model's characteristic Boolean algebra's theory does indeed contain $\mathcal T$, and Section \ref{section:contained-in-T} proves the converse inclusion, which concludes the proof of this paper's main result.

Finally, Section \ref{section:sequential} gives an example of application of this result to the problem of sequentialisation in a denotational model of the lambda-calculus with a control operator.

\section{Conventions and reminders about classical realizability}\label{section:conventions}

\subsection{First-order formulas on Boolean algebras}

The \emph{language of Boolean algebras} is the first-order language with equality over the signature $(0,1,\bor,\band,\bnot)$ (respectively: two constants, two binary function symbols with infix notation, and one unary function symbol). To make it clear which symbols we take as primitives, we spell out its grammar:

First-order terms:
\[ \arraycolsep=2pt \begin{array}{rlll}
a,b & := &       & z \text{ (first-order variable)}\\
    &    & \vert & 0 ~\vert~ 1 ~\vert~ a \lor b ~\vert~ a \land b ~\vert~ \lnot a
\end{array} \]

First-order formulas:
\[ \begin{array}{rll}
A,B & := & a \neq b ~\vert~ A \limp B ~\vert~ \forall z\, A
\end{array} \]

Note that, as is customary in classical realizablity, we take non-equality rather than equality as a primitive symbol, because its realizability interpretation is simpler.

The other usual symbols can be encoded as follows:
\begin{itemize}
\item $\bot$ is $0 \neq 0$,
\item for all first-order terms $a,b$, $a = b$ is $(a \neq b) \limp \bot$,
\item for all first-order formulas $A,B$, $A \wedge B$ is $(A \limp B \limp \bot) \limp \bot$,
\item for all first-order formulas $A,B$, $A \vee B$ is $(A \limp \bot) \limp (B \limp \bot) \limp \bot$,
\item for all first-order formulas $A$ and all first-order variables $z$, $\exists z~ A$ is $(\forall z~ (A \limp \bot)) \limp \bot$.
\end{itemize}

A set of closed first-order formulas is called a \emph{first-order theory}. %Given a first-order theory $\mathcal T$ and a first-order structure $\mathbb B$ over the language of Boolean algebras, we write $\mathbb B \models \mathcal T$ for ``$\mathbb B$ satisfies (all the first-order formulas of) $\mathcal T$''.
Over the signature we have chosen, the theory of Boolean algebras can be axiomatised by a finite set of equations. As a result, there exists a finite first-order theory $\mathcal{T}_{\operatorname{Bool}}$ consisting of:
\begin{itemize}
\item the first-order formula $0 \neq 1$,
\item plus a finite number of closed first-order formulas of the form $\forall \multi{z}~ a = b$ (where $\multi{z}$ is a list of variables),
\end{itemize}
such that for each first-order structure $\mathbb B$ over the language of Boolean algebras, $\mathbb B$ satisfies $\mathcal{T}_{\operatorname{Bool}}$ if and only if $\mathbb B$ is a Boolean algebra with at least two elements.

First-order formulas are defined up to $\alpha$-renaming. Given a first-order formula $A$ (repectively, a first-order term $a$), a list $\multi{z}$ of variables and a list $\multi{b}$ of first-order terms of equal length, we denote by $A[\multi{z} := \multi{b}]$ (respectively, $a[\multi{z} := \multi{b}]$) the simultaneous, capture-avoiding substitution of $\multi{z}$ with $\multi{b}$ in $A$ (respectively, in $a$).

\subsection{The $\lambda_c$-calculus}

\paragraph*{Syntax} The $\lambda_c$-calculus consists of three kinds of syntactic objects: \emph{$\lambda_c$-terms} (which represent programs), \emph{stacks} (which represent execution environments), and processes (which represent a program running in a given environment). They are defined by the following grammars, up to $\alpha$-renaming:

$\lambda_c$-terms:
\[ \arraycolsep=2pt \begin{array}{rlll}
t,u & := & & x ~\vert~ t u ~\vert~ \lambda x.\, t \\
&& \vert & \cc \text{\quad(call with current continuation)} \\
&& \vert & k_\pi \text{\quad($\pi$ stack)} \\
&& \vert & \instrU_n \text{\quad($n\in \mathbb N$: unrestricted additional instructions)} \\
&& \vert & \instrR_n \text{\quad($n\in \mathbb N$: restricted additional instructions)} \\
\end{array} \]

Stacks:
\[ \begin{array}{rlll}
\pi,\pi' & := & & t \sc \pi \text{\quad($t$ \emph{closed} $\lambda_c$-term)}\\
&& \vert & \omega \text{\quad(empty stack)}
\end{array} \]

Processes:
\[ \begin{array}{rll}
p,q & := & t \st \pi \text{\quad($t$ \emph{closed} $\lambda_c$-term)}
\end{array} \]

Given a $\lambda_c$-term $t$, a list $\multi{x}$ of variables and a list $\multi{u}$ of $\lambda_c$-terms of equal length, we denote by $t[\multi{x} := \multi{u}]$ the simultaneous, capture-avoiding substitution of $\multi{x}$ with $\multi{u}$ in $t$.

\paragraph*{Operational semantics} Processes are evaluated according to the rules of the \emph{Krivine abstract machine}. Namely, we denote by $\succ_1$ (``evaluates in one step to'') the least binary relation on the set of processes such that:
\[ \begin{array}{rcll}
t u \st \pi & \succ_1 & t \st u \sc \pi & \text{(Push)}\\
\lambda x.\, v \st t \sc \pi & \succ_1 & v[x := t] \st \pi & \text{(Grab)}\\
\cc \st t \sc \pi & \succ_1 & t \st k_\pi \sc \pi & \text{(Save)}\\
k_{\pi_2} \st t \sc {\pi_1} & \succ_1 & t \st \pi_2 & \text{(Restore)}\\
\end{array} \]
for all closed terms $t,u, \lambda x.\, v$ and all stacks $\pi, \pi_1, \pi_2$.

Moreover, we denote by $\succ$ (``evaluates to'') the reflexive and transitive closure of $\succ_1$.

The rules Push and Grab simulate weak head $\beta$-reduction, and the rules Save and Restore allow programs to manipulate continuations ($\cc$ stands for ``call with current continuation''). In the context of realizability, the former pair will ensure compatibility with intuitionistic logic, and the latter with classical logic.

Note that there are no rules for the additional instructions: their purpose will be to help construct specific \emph{poles} (see next subsection), and they can be ignored for the time being.

\paragraph*{Typing}

Typing judgements have the following form: $x_1 : A_1\cldotsc x_n \cln A_n \vdash t \cln B$, where $x_1\cldotsc x_n$ are pairwise distinct variables, $t$ is a $\lambda_c$-term with no free variables other than $x_1\cldotsc x_n$, and $A_1\cldotsc A_n$ are first-order formulas (possibly with free variables).

Typing judgements are defined up to $\alpha$-renaming (i.e. $x_1 \cln A_1\cldotsc x_n \cln A_n \vdash t \cln B$ is the same as $y_1 \cln A_1\cldotsc y_n \cln A_n \vdash t[\multi{x} := \multi{y}] : B$), and up to permutations of the context (i.e. $x_1 \cln A_1\cldotsc x_n \cln A_n \vdash t \cln B$ is the same as $x_{\sigma(1)} \cln A_{\sigma(1)}\cldotsc x_{\sigma (n)} \cln A_{\sigma(n)} \vdash t \cln B$ for all permutations $\sigma$).

A typing judgement is \emph{valid} if it can be derived from the following rules:

\smallskip

$\hfill
\begin{prooftree}
  \hypo{\Gamma, x \cln A \vdash t \cln B }
  \infer1{\Gamma \vdash \lambda x.\, t \cln A \limp B}
\end{prooftree}
\hfill
\hfill
\begin{prooftree}
  \hypo{\Gamma \vdash t \cln A \limp B }
  \hypo{\Gamma \vdash u \cln A }
  \infer2{\Gamma \vdash tu \cln B}
\end{prooftree}
\hfill$

\medskip

$\hfill
\begin{prooftree}
  \hypo{\Gamma \vdash t \cln A}
  \infer1[($z$ not free in $\Gamma$)]{\Gamma \vdash t \cln \forall z.\, A}
\end{prooftree}
\hfill
\hfill
\begin{prooftree}
  \hypo{\Gamma \vdash t \cln \forall z.\, A}
  \infer1{\Gamma \vdash t \cln A[z := b]}
\end{prooftree}
\hfill$

\medskip

$\hfill
\begin{prooftree}
  \infer0{\Gamma, x \cln A \vdash x \cln A}
\end{prooftree}
\hfill
\hfill
\begin{prooftree}
  \infer0{\Gamma \vdash \cc \cln ((A \limp B) \limp A) \limp A}
\end{prooftree}
\hfill$

\medskip

$\hfill
\begin{prooftree}
  \hypo{\Gamma[z := a] \vdash t \cln a \neq b }
  \infer1{\Gamma[z := b] \vdash t \cln a \neq b}
\end{prooftree}
\hfill
\hfill
\begin{prooftree}
  \hypo{\Gamma \vdash t \cln a \neq a }
  \infer1{\Gamma \vdash t \cln A}
\end{prooftree}
\hfill$

\smallskip

The first five are the usual rules of natural deduction, and the sixth types $\cc$ with Peirce's law (which allows classical deduction). The last two reformulate the usual elimination and introduction rules for equality using the symbol $\neq$ instead:

\begin{lemma} The following rule is admissible:
\[\begin{prooftree}
  \hypo{\Gamma \vdash t \cln A[z := a]}
  \infer[double]1[.]{\Gamma, x \cln a=b \vdash \cc(\lambda k.\, x(kt)) \cln A[z := b] }
\end{prooftree}\]
\end{lemma}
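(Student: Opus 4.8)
The plan is to show that the displayed double-line rule is admissible by exhibiting a typing derivation of its conclusion from its premise, using only the eight primitive rules listed above. The key is to recognize that this double-line rule is a form of Leibniz's law (substitution of equals for equals): from $a = b$ and a proof of $A[z := a]$, we want a proof of $A[z := b]$. The term $\cc(\lambda k.\, x(kt))$ is the classical realizability witness for this, and the strategy is to peel it apart and typecheck each piece.

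First I would recall that, by definition, $a = b$ abbreviates $(a \neq b) \limp \bot$, and that $\bot$ is $0 \neq 0$. So the variable $x \cln a = b$ really has type $(a \neq b) \limp \bot$. Next I would use the seventh primitive rule --- the one that rewrites the context along $a \neq b$, i.e. the reformulated elimination rule for equality --- to observe that it can equivalently be read as rewriting the \emph{conclusion}: if $\Gamma \vdash u \cln A[z := a]$ then, treating the occurrences of $a$ as instances of a fresh variable $z$ and rewriting to $b$, we can derive a term of type $A[z := b] \limp$ (something), provided we have $a \neq b$ available. Concretely, I would build the derivation as follows. Set $\Gamma' = \Gamma, x \cln a = b$. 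Under an additional hypothesis $k \cln A[z := b] \limp \bot$, I want to derive $\Gamma', k \cln A[z:=b]\limp\bot \vdash x(kt) \cln \bot$. To do this: from the premise I have $\Gamma' \vdash t \cln A[z := a]$; applying $k$ would need $t \cln A[z:=b]$, so here is where the equality is used --- using the context-rewriting rule (rule seven) with the roles of $a$ and $b$ arranged so that $a \neq b$ in the context lets us pass from $A[z:=a]$ to $A[z:=b]$, I get $kt \cln \bot$ under the hypothesis $a \neq b$; discharging that hypothesis by $\lambda$-abstraction gives a term of type $(a \neq b) \limp \bot$, i.e. of type $a = b$; then applying $x$ (which has exactly type $(a\neq b)\limp\bot$... ) --- more carefully, one applies $x$ to $kt$ where $kt$ is given type $a \neq b$ after suitable massaging, yielding $\bot$. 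Then $\lambda k.\, x(kt)$ has type $((A[z:=b]) \limp \bot) \limp \bot$, which is $((A[z:=b] \limp B) \limp B)$ with $B := \bot$ pattern-matched against Peirce's law. Finally, $\cc$ has type $((A[z:=b] \limp \bot) \limp A[z:=b]) \limp A[z:=b]$; but what we have is $((A[z:=b]\limp\bot)\limp\bot)\limp$ \dots, so I would instead instantiate Peirce's law as $\cc \cln (((A[z:=b]) \limp \bot) \limp A[z:=b]) \limp A[z:=b]$ and feed it $\lambda k.\, x(kt)$ where $k \cln A[z:=b] \limp \bot$ and $x(kt) \cln A[z:=b]$, the latter obtained because $x(kt) \cln \bot$ and $\bot$-elimination (the eighth rule, $\Gamma \vdash t \cln a \neq a$ implies $\Gamma \vdash t \cln A$, instantiated with $\bot = 0 \neq 0$) yields any type, in particular $A[z:=b]$. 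This produces exactly $\cc(\lambda k.\, x(kt)) \cln A[z := b]$ in context $\Gamma'$.

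The one genuinely delicate point --- the part I would be most careful about --- is the direction of the context-rewriting rule (rule seven): it is stated as going from $\Gamma[z:=a] \vdash t \cln a \neq b$ to $\Gamma[z:=b] \vdash t \cln a\neq b$, i.e. it rewrites in the \emph{hypotheses} while the subject formula $a \neq b$ stays fixed. To use it for rewriting the \emph{conclusion} $A[z:=a] \rightsquigarrow A[z:=b]$, I would apply it contrapositively: put $A[z:=a] \limp \bot$ (equivalently $A[z:=a]\limp$ anything, or rather keep $\bot$) as a hypothesis and $a \neq b$ (or the relevant atomic consequence) as the subject, then rewrite the hypothesis from the $a$-version to the $b$-version. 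In other words, the move is: given $t \cln A[z:=a]$ and $k \cln A[z:=b]\limp\bot$, I want $\bot$; I rewrite the type of $t$ by viewing ``$A[z:=?]$'' as the context and ``$\bot$'' (or the trivial $a \neq a$) as something stable --- this needs a small bit of care, but it is precisely the content of the reformulated equality-elimination rule and should go through by one application of rule seven together with one $\limp$-elimination. Everything else is bookkeeping with $\lambda$-abstraction ($\limp$-introduction), application ($\limp$-elimination), the axiom rule, Peirce's law for $\cc$, and the $\bot$-elimination rule, so I expect the whole derivation to fit in a handful of lines once the rewriting step is pinned down. The double line in the statement signals that the rule is admissible (derivable via this fixed macro-expansion $\cc(\lambda k.\, x(kt))$), not that it is a new primitive, so no soundness argument beyond exhibiting the derivation is required.
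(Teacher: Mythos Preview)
Your overall strategy matches the paper's exactly: instantiate Peirce's law for $\cc$, type the body $x(kt)$ using $\bot$-elimination (rule eight) at the end, and use the context-rewriting rule (rule seven) to move from $A[z:=a]$ to $A[z:=b]$. Where you go wrong is precisely at the point you flag as ``genuinely delicate'': the type you assign to $k$ at the outset.

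You repeatedly set $k \cln A[z:=b] \limp \bot$. With that typing, the application $kt$ does \emph{not} typecheck, because $t$ has type $A[z:=a]$, not $A[z:=b]$; and your proposed fix (``rewrite the type of $t$ by viewing $A[z:=?]$ as the context and $\bot$ as something stable'') cannot work as stated, since rule seven only fires when the \emph{conclusion} is literally $a \neq b$, not $\bot = 0 \neq 0$ or ``anything stable''. The paper avoids this snag by typing $k$ with domain $A[z:=a]$ and codomain $a \neq b$ from the start, i.e.\ $k \cln A[z:=a] \limp a \neq b$. Then $kt \cln a \neq b$ typechecks immediately; rule seven applies verbatim (the conclusion is $a \neq b$, and the context entry $k \cln A[z] \limp a \neq b$ gets rewritten from $z:=a$ to $z:=b$); one then applies $x \cln (a\neq b) \limp \bot$ to obtain $x(kt) \cln \bot$, uses rule eight to retype it as $A[z:=b]$, abstracts $k$, and applies $\cc$ at the instance $((A[z:=b] \limp a \neq b) \limp A[z:=b]) \limp A[z:=b]$. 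In short: the codomain of $k$ should be $a \neq b$, not $\bot$, and its domain should \emph{start} as $A[z:=a]$ so that $kt$ is well-formed before the rewrite, not after. Once you make that adjustment, your sketch becomes exactly the paper's derivation.
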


\begin{proof}
If the typing judgement $\Gamma \vdash t \cln A[z := a]$ is valid, then so is the judgement $\Gamma, x \cln a=b, k \cln A[z := a] \limp a \neq b \vdash t \cln A[z := a]$. Then we can use the following derivation:
\[\begin{prooftree}
  \hypo{\Gamma, x \cln a=b, k \cln A[z := a] \limp a \neq b \vdash t \cln A[z := a]}
  \infer1{\vdots}
  \infer[]1{\Gamma, x \cln a=b, k \cln A[z := a] \limp a \neq b \vdash kt \cln  a \neq b}
  \infer1{\Gamma, x \cln a=b, k \cln A[z := b] \limp a \neq b \vdash kt \cln  a \neq b}
  \infer1{\vdots}
  \infer[]1{\Gamma, x \cln a=b, k \cln A[z := b] \limp a \neq b \vdash x(kt) \cln  \bot}
  \infer1{\Gamma, x \cln a=b, k \cln A[z := b] \limp a \neq b \vdash x(kt) \cln  A[z := b]}
  \infer1{\Gamma, x \cln a=b \vdash \lambda k.\, x(kt) \cln (A[z {:=} b] \limp a \neq b) \limp A[z {:=} b]}
  \infer1{\vdots}
  \infer[]1[.]{\Gamma, x \cln a=b \vdash \cc(\lambda k.\, x(kt)) \cln A[z := b] }
\end{prooftree}\]
 
\end{proof}

%\paragraph*{Logical soundness and completeness}
%
%As a particular case of the completeness theorem of first order logic, for any two theories $\mathcal S$ and $\mathcal T$, the following two statements are equivalent:
%\begin{itemize}
%\item Any first-order structure that satisfies $\mathcal S$ also satisfies $\mathcal T$ (i.e. $\mathcal S$ implies satisfies $\mathcal T$),
%\item For all $B \in \mathcal T$, there exists a valid judgement of the form $x_1 \colon A_1\cldotsc x_n \colon A_n \vdash t \colon B$ with $A_1\cldotsc A_n \in \mathcal S$.
%\end{itemize}

\subsection{Classical realizability}

\paragraph*{Poles}

A \emph{pole} is a set $\pole$ of processes that is \emph{saturated}, i.e. such that for all processes $p,q$, if $p \succ q$ and $q \in \pole$, then $p \in \pole$.

\paragraph*{Falsity values and truth values}

For each pole $\pole$ and each closed first-order formula $A$, we define inductively its \emph{falsity value} $\falsity{A}_\pole$ (which is a set of stacks) and its \emph{truth value} $\truth{A}_\pole$ (which is a set of closed $\lambda_c$-terms) with respect to $\pole$:
\begin{itemize}
\item $\truth{A}_\pole = \{ t;\, \forall \pi \in \falsity{A}_\pole~ t \st \pi \in \pole \}$,
\item $\falsity{a \neq b}_\pole = \left\{ \arraycolsep=3pt \begin{array}{cl}
\emptyset & \text{if $a \neq b$ is true (in the}  \\
          & \text{Boolean algebra $\{0,1\}$),} \\
\{ \text{all stacks} \} & \text{if $a \neq b$ is false.}\\
\end{array}\right.$
\item $\falsity{A \limp B}_\pole = \{ t \sc \pi;\, t \in \truth{A}_\pole, \pi \in \falsity{B}_\pole \}$,
\item $\falsity{\forall z\, A}_\pole = \falsity{A[z := 0]}_\pole \cup \falsity{A[z := 1]}_\pole$.
\end{itemize}
We say that a given closed $\lambda_c$-term $t$ \emph{realizes} a given closed first-order formula $A$ \emph{with respect to} a given pole $\pole$ if $t \in \truth{A}_\pole$.

\paragraph*{Adequacy} A key fact about classical realizability is that it is compatible with the above typing rules, and therefore with classical reasoning:

\begin{lemma}[Adequacy lemma] \label{lemma:adequacy} Let $x_1 \cln A_1\cldotsc x_n \cln A_n \vdash t \cln B$ be a valid typing judgement (with $A_1\cldotsc A_n$ closed), and let $u_1\cldotsc u_n$ be closed $\lambda_c$-terms. For all poles $\pole$, if $u_1\cldotsc u_n$ realize $A_1\cldotsc A_n$ respectively with respect to $\pole$, then $t[\multi{x} := \multi{u}]$ realizes $B$ with respect to $\pole$.
\end{lemma}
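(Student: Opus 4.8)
The plan is to prove the adequacy lemma by induction on the derivation of the typing judgement $x_1 \cln A_1\cldotsc x_n \cln A_n \vdash t \cln B$. Before starting the induction proper, I would record a small technical observation that makes the statement go through smoothly: it suffices to prove, by induction on the derivation, the slightly more flexible statement in which the formulas $A_1\cldotsc A_n$ and $B$ need not be closed, but we are given a closing substitution $\sigma$ (mapping the free first-order variables to $0$ or $1$) and require that each $u_i$ realize $A_i\sigma$ and conclude that $t[\multi{x}:=\multi{u}]$ realizes $B\sigma$. This reformulation is needed because the rules for $\forall$ and for substitution in a non-equality manipulate open formulas. Equivalently, one can note that the falsity value of an open formula only depends on the closure obtained by instantiating each free variable by $0$ or $1$, and fold this bookkeeping into the induction hypothesis.

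Then I would go through the rules one by one. For the axiom rule $\Gamma, x\cln A \vdash x \cln A$, the conclusion is immediate since $x[\multi{x},x := \multi{u},u_x] = u_x$, which realizes $A$ by hypothesis. For the $\limp$-introduction rule, unfolding the definition of $\falsity{A\limp B}_\pole = \{t'\sc\pi; t'\in\truth{A}_\pole, \pi\in\falsity{B}_\pole\}$, I would take a stack $t'\sc\pi$ in this falsity value, observe that $\lambda x.\,(t[\multi{x}:=\multi{u}]) \st t'\sc\pi \succ_1 t[\multi{x}:=\multi{u}][x:=t'] \st \pi$ by the Grab rule, and conclude by saturation of $\pole$ together with the induction hypothesis applied with the extended context (adding $x\cln A$ realized by $t'$). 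For $\limp$-elimination, given $\pi\in\falsity{B}_\pole$, I would use the Push rule $t[\multi{x}:=\multi{u}]\,u'[\multi{x}:=\multi{u}] \st \pi \succ_1 t[\multi{x}:=\multi{u}] \st u'[\multi{x}:=\multi{u}]\sc\pi$, note that $u'[\multi{x}:=\multi{u}]\sc\pi \in \falsity{A\limp B}_\pole$ by the two induction hypotheses, and conclude by saturation. The rule for $\cc$ is the standard argument: given $t'\sc\pi \in \falsity{((A\limp B)\limp A)\limp A}_\pole$, so $t'$ realizes $(A\limp B)\limp A$ and $\pi\in\falsity{A}_\pole$, one uses the Save rule to get $\cc \st t'\sc\pi \succ_1 t' \st k_\pi\sc\pi$, checks that $k_\pi$ realizes $A\limp B$ (using the Restore rule: for any $u'\in\truth{A}_\pole$ and any $\pi'\in\falsity{B}_\pole$, $k_\pi \st u'\sc\pi' \succ_1 u'\st\pi \in \pole$), hence $k_\pi\sc\pi \in \falsity{(A\limp B)\limp A}_\pole$, and concludes by saturation. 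The two quantifier rules are handled using the observation that $\falsity{\forall z\,A}_\pole = \falsity{A[z:=0]}_\pole \cup \falsity{A[z:=1]}_\pole$: the introduction rule follows because realizing both instances is exactly realizing the union, and the elimination rule because instantiating $z$ by an arbitrary term $b$, once the context is closed, reduces to instantiating by $0$ or $1$ (the value of $b$ in $\{0,1\}$). Finally, the two rules for $\neq$: the rule deriving $\Gamma \vdash t \cln A$ from $\Gamma \vdash t \cln a\neq a$ is vacuous on the realizability side, since $a\neq a$ is false in $\{0,1\}$, so its falsity value is the set of all stacks, and the only way $t$ can realize it is if $t\st\pi\in\pole$ for every stack — hence $t$ realizes everything; and the congruence rule for $a\neq b$ follows because substituting a variable that does not occur in $t$ changes neither the truth value of the formula $a\neq b$ in $\{0,1\}$ (when $a\sigma = b\sigma$ fails on both sides consistently — actually one checks both instantiations give the same Boolean truth value for the two premises) nor the term.

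The main obstacle, to the extent there is one, is purely organizational rather than mathematical: getting the treatment of open formulas right so that the $\forall$ rules and the $\neq$-congruence rule compose cleanly with the rest of the induction. Once the induction hypothesis is phrased to quantify over all closing substitutions into $\{0,1\}$ (or, equivalently, once one proves the auxiliary fact that $\falsity{A}_\pole$ depends only on the $\{0,1\}$-closure of $A$), every case becomes a short computation with the Krivine machine transitions and an appeal to the saturation of $\pole$. I do not expect any genuine difficulty in any single case; the proof is essentially a bookkeeping exercise verifying that each typing rule is sound for the realizability semantics, which is exactly why it is stated as a lemma.
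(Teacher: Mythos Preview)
The paper does not actually prove this lemma; it is stated as a well-known fact of Krivine-style classical realizability and left without proof. Your proposal is the standard argument---induction on the typing derivation, with the induction hypothesis strengthened to quantify over all closing substitutions of the first-order variables into $\{0,1\}$---and it is correct.

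The only case where your explanation wobbles is the congruence rule for $\neq$ (from $\Gamma[z:=a]\vdash t\cln a\neq b$ infer $\Gamma[z:=b]\vdash t\cln a\neq b$). The point is not about ``a variable that does not occur in $t$''; the term $t$ is unchanged in this rule, and what changes is the \emph{context}. The clean argument is a case split on the closing substitution $\sigma$: if $a\sigma\neq b\sigma$ holds in $\{0,1\}$ then $\falsity{(a\neq b)\sigma}_\pole=\emptyset$ and the conclusion is trivial; if $a\sigma=b\sigma$ in $\{0,1\}$ then each formula in $\Gamma[z:=a]\sigma$ has the same falsity value as the corresponding formula in $\Gamma[z:=b]\sigma$ (since falsity values depend only on the $\{0,1\}$-evaluation of the atomic subformulas), so the realizers $u_i$ of the new context also realize the old one, and the induction hypothesis applies directly. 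With that clarification your sketch goes through without difficulty.
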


\subsection{Realizability theories}

We would like to associate with each pole a first-order theory of ``all first-order formulas that are realized with respect to that pole''. However, given any non-empty pole $\pole$, there is bound to be a closed $\lambda_c$-term which realizes $\bot$ (namely: take any $t \st \pi \in \pole$ and consider $tk_\pi$). Therefore, in order to obtain a meaningful notion, we must put some restrictions on which terms are allowed as realizers.

We call \emph{proof-like} any closed $\lambda_c$-term which contains no stack constants ($k_\pi$) and no restricted instructions ($\instrR_n$).

\begin{definition} Let $\pole$ be a pole. The \emph{first-order theory of} $\pole$ is the set of all closed first-order formulas which are realized \emph{by at least one proof-like term} with respect to $\pole$. We denote it by $\th\pole$.
\end{definition}

\begin{remark}
As stated in the introduction, the benefit of restricting the formulas to the language of Boolean algebra is that each pole $\pole$ can be interpreted as a realizability model whose characteristic Boolean algebra's theory is simply $\th\pole$: no translation is needed.
\end{remark}

We say that a pole $\pole$ is \emph{consistent} if its first-order theory is, i.e. if there exists a first-order structure which satisfies $\th\pole$.

\paragraph*{Logical closure} As a consequence of the adequacy lemma and the completeness theorem of first-order logic, the first-order theory of a pole is always \emph{closed under classical deduction}:

\begin{lemma}\label{lemma:logical-closure} Let $\pole$ be a pole and $A$ a closed first-order formula. If $\th\pole$ implies $A$ (in the sense that any first-order structure which satisfies $\th\pole$ also satisfies $A$), then $A \in \th\pole$.
\end{lemma}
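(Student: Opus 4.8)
The plan is to reduce the semantic statement ($\th\pole$ implies $A$) to a syntactic one via the completeness theorem for first-order logic, and then to transport a syntactic derivation of $A$ into a realizer using the Adequacy Lemma. Concretely, suppose every first-order structure satisfying $\th\pole$ also satisfies $A$. By the completeness theorem of classical first-order logic (with equality), there is a finite subset $\{B_1, \ldots, B_n\} \subseteq \th\pole$ and a derivation, in classical natural deduction with equality, of $A$ from the hypotheses $B_1, \ldots, B_n$. Since our typing rules (the five standard natural-deduction rules, the rule for $\cc$ giving Peirce's law, and the two reformulated equality rules) are complete for classical predicate logic with equality, this derivation can be reflected as a valid typing judgement $x_1 \cln B_1\cldotsc x_n \cln B_n \vdash t \cln A$ for some closed-enough $\lambda_c$-term $t$, and — crucially — $t$ can be taken to be \emph{proof-like}, i.e. built only from variables, abstraction, application, and $\cc$, with no stack constants $k_\pi$ and no restricted instructions $\instrR_n$, since none of the typing rules introduces those.

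Next I would invoke the definition of $\th\pole$: each $B_i$ is realized, with respect to $\pole$, by some proof-like term $u_i$. Then apply the Adequacy Lemma (Lemma \ref{lemma:adequacy}) to the valid judgement $x_1 \cln B_1\cldotsc x_n \cln B_n \vdash t \cln A$ together with the realizers $u_1\cldotsc u_n$: it yields that $t[\multi{x} := \multi{u}]$ realizes $A$ with respect to $\pole$. Finally, $t[\multi{x} := \multi{u}]$ is again proof-like, because substituting proof-like terms into a proof-like term produces a proof-like term (proof-likeness — absence of $k_\pi$ and $\instrR_n$ — is clearly preserved under capture-avoiding substitution). Hence $A$ is realized by a proof-like term with respect to $\pole$, which by definition means $A \in \th\pole$.

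I expect the main obstacle — or at least the only non-bookkeeping point — to be the precise statement that the typing system is complete for classical first-order logic with equality, and in particular that the two $\neq$-based equality rules, together with the axiomatisation encoded in $\mathcal{T}_{\operatorname{Bool}}$-style formulas, capture exactly the usual equality rules; this is where the admissible rule proved in the Lemma just before Section 2.3 does the work of recovering Leibniz substitution. Everything else — the appeal to Gödel completeness, the preservation of proof-likeness under substitution, and the bridge via adequacy — is routine. One should also note that the argument does not require $\pole$ to be consistent: if $\th\pole$ is inconsistent it has no models, so it vacuously implies $A$, and indeed in that case $\bot \in \th\pole$, from which $A$ follows by a proof-like term using the $a \neq a$ elimination rule.
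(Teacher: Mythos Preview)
Your proposal is correct and matches the paper's approach exactly: the paper does not spell out a proof but simply presents the lemma ``as a consequence of the adequacy lemma and the completeness theorem of first-order logic,'' and what you have written is precisely the unpacking of that sentence. Your identification of the one non-trivial point --- that the $\neq$-based typing rules together with $\cc$ recover full classical natural deduction with equality, via the admissible Leibniz rule proved just before --- is apt, and the remaining bookkeeping (compactness, proof-likeness closed under substitution, the vacuous inconsistent case) is handled correctly.
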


In particular, $\pole$ is consistent if and only if no proof-like term realizes $\bot$ with respect to it.

\begin{remark}
In fact, because we chose a very restricted language for formulas, Lemma \ref{lemma:logical-closure} would hold even in the absence of the instruction $\cc$. However, the goal here is to describe a method which can be generalised to richer contexts, and that requires an instruction such as $\cc$ that is capable of altering the control flow: otherwise, all we get is closure under intuitionistic deduction.
\end{remark}

\paragraph*{Boolean algebras} In addition, the first-order theory of a pole is always an extension of $\mathcal{T}_{\operatorname{Bool}}$, and therefore any first-order structure which satisfies it is a Boolean algebra with at least two elements. This is a consequence of the following lemma:

\begin{lemma}\label{lemma:eq-diff-real} Let $A$ be a closed first-order formula that is true in the Boolean algebra $\{0,1\}$.
\begin{itemize}
\item If $A$ is of the form $\forall \multi{z}~ a \neq b$, then $A$ is realized by all closed $\lambda_c$-terms, \emph{universally} (i.e. with respect to all poles);
\item If $A$ is of the form $\forall \multi{z}~ a = b$, then $A$ is universally realized by $\lambda x.\, x$.
\end{itemize}
\end{lemma}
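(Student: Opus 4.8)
The plan is to unfold the definitions of falsity and truth values down to the base case of non-equalities between \emph{closed} first-order terms, where the falsity value is determined entirely by truth in $\{0,1\}$. As a preliminary observation, a straightforward induction on the length of $\multi{z}$, using the clause $\falsity{\forall z\, A}_\pole = \falsity{A[z := 0]}_\pole \cup \falsity{A[z := 1]}_\pole$, shows that $\falsity{\forall \multi{z}~ C}_\pole = \bigcup_{\multi{c} \in \{0,1\}^{|\multi{z}|}} \falsity{C[\multi{z} := \multi{c}]}_\pole$ for every pole $\pole$ and every formula $\forall \multi{z}~ C$. Note also that since $A$ is closed, $\multi{z}$ lists all the free variables of $a$ and $b$, so each $a' := a[\multi{z} := \multi{c}]$ and $b' := b[\multi{z} := \multi{c}]$ is a closed Boolean term and hence denotes a definite element of $\{0,1\}$.

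For the first item, let $A = \forall \multi{z}~ a \neq b$ be true in $\{0,1\}$. For every $\multi{c} \in \{0,1\}^{|\multi{z}|}$ the non-equality $a' \neq b'$ is then true in $\{0,1\}$, so $\falsity{a' \neq b'}_\pole = \emptyset$ by the defining clause for non-equality; taking the union, $\falsity{A}_\pole = \emptyset$. Hence for every pole $\pole$ and every closed $\lambda_c$-term $t$ the condition defining $t \in \truth{A}_\pole$ is vacuously satisfied.

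For the second item, recall that $a = b$ abbreviates $(a \neq b) \limp \bot$, so $A = \forall \multi{z}~ ((a \neq b) \limp \bot)$. Fix a pole $\pole$ and a stack $\rho \in \falsity{A}_\pole$. By the preliminary observation and the clause for $\limp$, we have $\rho = u \sc \pi$ with $\pi \in \falsity{\bot}_\pole$ and $u \in \truth{a' \neq b'}_\pole$ for some $\multi{c}$. Since $A$ is true in $\{0,1\}$, the equality $a' = b'$ holds there, i.e. $a' \neq b'$ is false, so $\falsity{a' \neq b'}_\pole$ is the set of all stacks; therefore $u \in \truth{a' \neq b'}_\pole$ forces $u \st \pi \in \pole$. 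Now $\lambda x.\, x \st u \sc \pi \succ_1 u \st \pi$ by the Grab rule, and $\pole$ is saturated, so $\lambda x.\, x \st \rho \in \pole$. As $\rho$ was arbitrary in $\falsity{A}_\pole$, this gives $\lambda x.\, x \in \truth{A}_\pole$; and since $\lambda x.\, x$ is proof-like, $A \in \th\pole$ as well.

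There is no real obstacle here; the only point requiring a little care is that the realizability interpretation of $\forall$ only ever substitutes the constants $0$ and $1$, which is precisely why the hypothesis ``true in $\{0,1\}$'' — rather than the stronger ``true in all Boolean algebras'' — is both the natural one and enough to make the argument go through.
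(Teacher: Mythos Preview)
Your proof is correct and follows essentially the same route as the paper's: compute the falsity value of the universal formula as a union over substitutions from $\{0,1\}$, observe it is empty in the first case, and in the second case use that any stack $u\sc\pi$ in the falsity value has $u$ realizing a false non-equality (hence $u\st\pi\in\pole$), then conclude by the Grab rule and saturation. The only cosmetic difference is that the paper argues per instance $(a=b)[\multi{z}:=\multi{\alpha}]$ rather than via your explicit union formula for $\falsity{\forall\multi{z}\,C}_\pole$, but this is the same argument.
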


\begin{corollary} \label{cor:realize-ba} For all poles $\pole$, $\th{\pole}$ contains $\mathcal{T}_{\operatorname{Bool}}$.
\end{corollary}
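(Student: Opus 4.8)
The plan is to deduce the corollary directly from Lemma \ref{lemma:eq-diff-real} together with the explicit shape of $\mathcal{T}_{\operatorname{Bool}}$. Recall that $\mathcal{T}_{\operatorname{Bool}}$ was chosen to consist of the single formula $0 \neq 1$ plus finitely many closed formulas of the form $\forall \multi{z}~ a = b$, and that a first-order structure satisfies $\mathcal{T}_{\operatorname{Bool}}$ exactly when it is a Boolean algebra with at least two elements. Since the two-element Boolean algebra $\{0,1\}$ is such a structure, every formula in $\mathcal{T}_{\operatorname{Bool}}$ is true in $\{0,1\}$; this is the only property of $\mathcal{T}_{\operatorname{Bool}}$ we need.

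First I would handle the formula $0 \neq 1$. It is true in $\{0,1\}$ and has the form $\forall \multi{z}~ a \neq b$ with $\multi{z}$ empty, so the first item of Lemma \ref{lemma:eq-diff-real} applies: it is realized universally by every closed $\lambda_c$-term, in particular by the term $\lambda x.\, x$, which is proof-like since it contains neither stack constants $k_\pi$ nor restricted instructions $\instrR_n$. Hence $0 \neq 1 \in \th\pole$ for every pole $\pole$.

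Next I would handle the equational axioms. Let $\forall \multi{z}~ a = b$ be one of the finitely many such formulas in $\mathcal{T}_{\operatorname{Bool}}$. As noted above it is true in $\{0,1\}$, so by the second item of Lemma \ref{lemma:eq-diff-real} it is universally realized by $\lambda x.\, x$, which is again proof-like; hence it too belongs to $\th\pole$ for every pole $\pole$. Putting the two cases together, every element of $\mathcal{T}_{\operatorname{Bool}}$ is realized by a proof-like term with respect to $\pole$, i.e. $\mathcal{T}_{\operatorname{Bool}} \subseteq \th\pole$.

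I do not expect any real obstacle: the content is entirely in Lemma \ref{lemma:eq-diff-real}, and the corollary is just the observation that the finitely many axioms of $\mathcal{T}_{\operatorname{Bool}}$ each have one of the two syntactic shapes covered by that lemma and are all true in $\{0,1\}$. The only minor point to verify is that the realizers produced in both cases, namely $\lambda x.\, x$, are proof-like, which is immediate from the definition.
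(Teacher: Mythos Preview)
Your proposal is correct and is exactly the argument the paper has in mind: the corollary is stated immediately after Lemma~\ref{lemma:eq-diff-real} without its own proof, and is meant to follow from that lemma together with the shape of $\mathcal{T}_{\operatorname{Bool}}$ and the observation that $\lambda x.\, x$ is proof-like. Your write-up makes the implicit steps explicit and there is nothing to add.
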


\begin{proof}[Proof of Lemma \ref{lemma:eq-diff-real}]
Let $\pole$ be a pole. First part: for all lists $\multi{w}$ of elements of $\{0,1\}$, we know that $a \neq b$ is true in $\{0,1\}$, therefore the falsity value of $\forall \multi{z}~ a \neq b$ is empty, and so this first-order formula is realized by all closed $\lambda_c$-terms.

Second part: let $\multi{\alpha}$ be a list of elements of $\{0,1\}$. We must prove that $\lambda x.\, x$ realizes $(a = b)[\multi{z} := \multi{\alpha}]$. Let $t\sc\pi$ be in the falsity value of $(a = b)[\multi{z} := \multi{\alpha}]$, i.e. $t$ realizes $a \neq b$ and $\pi$ is any stack. Since $a \neq b$ is false, its falsity value contains all stacks, which means that $t \st \pi$ is in $\pole$. Since $\lambda x.\, x \st t \sc \pi$ evaluates to $t \st \pi$, it is also in $\pole$.
\end{proof}

\paragraph*{Horn clauses}

We have just seen that any (universally quantified) equation or non-equation that is true in the Boolean algebra $\{0,1\}$ is universally realized. In fact, this ``transfer'' property holds for all Horn clauses:

A \emph{Horn clause} is a closed first-order formula of the form either $\forall \multi{z}~ (a_1 = a_1' \limp \ldots \limp a_n = a_n' \limp b = b')$ (\emph{definite clause}) or $\forall \multi{z}~ (a_1 = a_1' \limp \ldots \limp a_n = a_n' \limp b \neq b')$ (\emph{goal clause}).

\begin{lemma}\label{lemma:horn-ba}
Let $A$ be a Horn clause. Then $A$ is true in the Boolean algebra $\{0,1\}$ if and only if it is true in all Boolean algebras with at least two elements.
\end{lemma}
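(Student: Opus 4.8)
The plan is to prove the non-trivial implication --- from ``$A$ is true in $\{0,1\}$'' to ``$A$ is true in all Boolean algebras with at least two elements'' --- since the converse is immediate ($\{0,1\}$ is itself such an algebra). The strategy is to combine two classical preservation properties of Horn clauses, namely preservation under direct products and under substructures, with Stone's representation theorem.

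First I would recall that, by Stone's representation theorem, every Boolean algebra $\mathbb B$ is isomorphic to a subalgebra of a power set algebra $\mathcal P(X)$, i.e.\ of a power $\{0,1\}^X$ of the two-element Boolean algebra (concretely, $X$ is the set of ultrafilters of $\mathbb B$, with the embedding $b \mapsto \{\,U \in X;\, b \in U\,\}$). The point for us is that when $\mathbb B$ has at least two elements, the set $X$ is \emph{non-empty}: since $0 \neq 1$ in $\mathbb B$, the ultrafilter lemma produces at least one ultrafilter.

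Second I would check directly that any Horn clause $A$ true in $\{0,1\}$ is true in $\{0,1\}^X$ whenever $X \neq \emptyset$. An element of $\{0,1\}^X$ is a family $(f_x)_{x\in X}$, the operations are computed coordinatewise, and so the value of any first-order term at a tuple $\multi f$ is the family of its values at the coordinate tuples $\multi{f_x}$; moreover two elements of $\{0,1\}^X$ are equal exactly when they agree in every coordinate. Hence, given a tuple $\multi f$ satisfying all hypotheses $a_i = a_i'$ of the clause, each coordinate tuple $\multi{f_x}$ satisfies these hypotheses in $\{0,1\}$, so --- as $A$ holds in $\{0,1\}$ --- it satisfies the conclusion. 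For a definite clause the conclusion is an equation $b = b'$, which then holds coordinatewise, hence in $\{0,1\}^X$; this works for any $X$. For a goal clause the conclusion is a non-equation $b \neq b'$, and we only get that the values of $b$ and $b'$ at $\multi f$ differ in \emph{every} coordinate --- which forces them to be distinct precisely because $X \neq \emptyset$. This is exactly where the hypothesis ``at least two elements'' enters: in the degenerate one-element Boolean algebra, corresponding to $X = \emptyset$, a goal clause with a non-empty list of hypotheses fails.

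Finally I would use preservation of universal sentences under substructures: a Horn clause is a universally quantified quantifier-free formula, and the truth of a quantifier-free formula on a tuple is absolute between a structure and any of its substructures. Putting the pieces together: a Boolean algebra $\mathbb B$ with at least two elements embeds into some $\{0,1\}^X$ with $X \neq \emptyset$; $A$ is true in $\{0,1\}^X$ by the product computation; hence $A$ is true in $\mathbb B$. The only genuinely delicate point is the goal-clause case of the product step and its interaction with non-degeneracy; the rest is routine bookkeeping.
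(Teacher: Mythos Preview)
Your proposal is correct and follows essentially the same approach as the paper: embed $\mathbb B$ into a nonempty power $\{0,1\}^X$ via Stone's representation theorem, verify the Horn clause in $\{0,1\}^X$ by a coordinatewise computation (treating definite and goal clauses separately, with non-emptiness of $X$ used exactly in the goal-clause case), and then descend to $\mathbb B$ using that $A$ is universal and hence preserved under substructures. The paper's proof is a bit more terse but the structure and the key steps are identical.
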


\begin{corollary} Let $A$ be a Horn clause. If $A$ is true in the Boolean algebra $\{0,1\}$, then it is universally realized.
\end{corollary}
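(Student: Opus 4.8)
The plan is to combine Lemma \ref{lemma:horn-ba} with the closure properties of the first-order theory of a pole. Fix an arbitrary pole $\pole$. By Corollary \ref{cor:realize-ba}, $\th\pole$ contains $\mathcal{T}_{\operatorname{Bool}}$, so every first-order structure that satisfies $\th\pole$ is a Boolean algebra with at least two elements. By hypothesis $A$ is true in $\{0,1\}$, hence by Lemma \ref{lemma:horn-ba} it is true in all Boolean algebras with at least two elements, hence in every model of $\th\pole$; in other words $\th\pole$ implies $A$. By Lemma \ref{lemma:logical-closure} this gives $A \in \th\pole$, i.e. $A$ is realized by a proof-like term with respect to $\pole$. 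Since $\pole$ was arbitrary, $A$ is realized with respect to every pole, which is exactly what ``universally realized'' means here.

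If one wants a single uniform realizer, independent of the pole, the plan is slightly different. Since $A$ is true in all Boolean algebras with at least two elements, the completeness theorem of first-order logic gives a derivation of $A$ from $\mathcal{T}_{\operatorname{Bool}}$ in classical deduction. Every formula occurring in that derivation can be taken in the $\limp, \forall, \neq$-fragment (a Horn clause, and each axiom of $\mathcal{T}_{\operatorname{Bool}}$, already lies there once $\bot$ is read as $0 \neq 0$ and $a = b$ as $(a \neq b) \limp \bot$), so the derivation can be read off as a typing derivation $y_1 \cln \phi_1 \cldotsc y_m \cln \phi_m \vdash t \cln A$ built from the eight typing rules (together with the admissible substitution rule derived earlier from the two $\neq$-rules), where $\phi_1 \cldotsc \phi_m$ enumerate $\mathcal{T}_{\operatorname{Bool}}$ and $t$ becomes proof-like once its free variables are substituted away. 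By Lemma \ref{lemma:eq-diff-real} each $\phi_j$ is universally realized by a fixed proof-like term $u_j$ (namely $\lambda x.\, x$ for the equational axioms, and any closed term for $0 \neq 1$), so by the adequacy lemma (Lemma \ref{lemma:adequacy}) the fixed closed proof-like term $t[\multi{y} := \multi{u}]$ realizes $A$ with respect to every pole.

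The main obstacle --- in fact the only point that is not bookkeeping --- is the passage from provability in classical first-order logic to typability, i.e. the observation that the given typing rules are complete for the $\limp, \forall, \neq$-fragment of classical deduction once the defined connectives $\land, \lor, \exists, \bot$ are unfolded. This is precisely the fact that already underlies Lemma \ref{lemma:logical-closure}, so in practice the first, purely semantic route is the economical one and the second route is worth spelling out only if uniformity of the realizer is explicitly needed.
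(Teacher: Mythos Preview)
The paper states this as an immediate corollary of Lemma~\ref{lemma:horn-ba} without giving a separate proof; your first paragraph is precisely the intended argument (combine Lemma~\ref{lemma:horn-ba} with Corollary~\ref{cor:realize-ba} and Lemma~\ref{lemma:logical-closure}). Your second and third paragraphs, on extracting a single uniform realizer via completeness and adequacy, are correct and go beyond what the paper spells out.
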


\begin{proof}[Proof of Lemma \ref{lemma:horn-ba}] Assume that $A$ is true in $\{0,1\}$, and let $\mathbb B$ be a Boolean algebra with at least two element. There exists a nonempty set $X$ and an injective morphism of Boolean algebras (i.e. a non-necessarily elementary embedding) $\varphi$ from $\mathbb B$ to $\{0,1\}^X$ \cite[Corollary IV.1.12]{bss:universal-algebra}(alternatively, this is also an immediate consequence of Stone's representation theorem). Since $A$ is a universal ($\Pi^0_1$), it is sufficient to prove that $A$ is true in the Boolean algebra $\{0,1\}^X$.

Let $\multi{\delta} \in \{0,1\}^X$. Let $\alpha_1, \alpha_1'\cldotsc \alpha_n, \alpha_n', \beta, \beta'$ be respectively the values of $a_1[\multi{z} := \multi{\delta}]\cldotsc b'[\multi{z} := \multi{\delta}]$ in $\{0,1\}^X$.

Assume that for all $i \leq n$, $\alpha_i = \alpha_i'$, i.e. for all $x \in X$, $\alpha_i(x) = \alpha_i'(x)$.

For all $x \in X$, evaluation at $x$ is a morphism of Boolean algebras from $\{0,1\}^X$ to $\{0,1\}$. Therefore, the value of $a_1[\multi{z} := \multi{\delta}(x)]$ in $\{0,1\}$ is $\alpha_i(x)$, the value of $a_1'[\multi{z} := \multi{\delta}(x)]$ in $\{0,1\}$ is $\alpha_i'(x)$, etc.

Therefore, if $A$ is definite, then for all $x$, we have $\beta(x) = \beta'(x)$, because $A$ is true in $\{0,1\}$. In other words, $\beta = \beta'$, which means that $A$ is true in $\{0,1\}^X$.

On the other hand, if $A$ is a goal clause, then for all $x$, we have $\beta(x) \neq \beta'(x)$. Since $X$ is non-empty, this means that $\beta \neq \beta'$, and so $A$ is true in $\{0,1\}^X$.
\end{proof}

With all these conventions written down, we can precisely state the main result of this paper:

\begin{theorem}\label{thm:main} Let $\mathcal T$ be a first-order theory. The following two statements are equivalent:
\begin{itemize}
\item $\mathcal T$ is closed under classical deduction and contains the theory of Boolean algebras with at least two elements;
\item There exists a pole whose theory is exactly $\mathcal T$.
\end{itemize}
\end{theorem}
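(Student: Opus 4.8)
The plan is to treat the two implications separately. The implication ``second $\Rightarrow$ first'' is already in hand: for any pole $\pole$, the theory $\th\pole$ is closed under classical deduction by Lemma~\ref{lemma:logical-closure}, and it contains $\mathcal{T}_{\operatorname{Bool}}$, hence the theory of Boolean algebras with at least two elements, by Corollary~\ref{cor:realize-ba}. So the substance of the theorem is the converse: given a theory $\mathcal T$ that is closed under classical deduction and contains $\mathcal{T}_{\operatorname{Bool}}$, I would build a pole $\pole$ with $\th\pole = \mathcal T$. Recall that in our restricted language the proposition ``the characteristic Boolean algebra satisfies $A$'' is literally the formula $A$, so making an instruction $\gamma_A$ ``realize $\gimel 2 \models A$'' just means arranging that $\gamma_A \in \truth{A}_\pole$.

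If $\mathcal T$ is inconsistent then, being deductively closed, it is the set of all closed first-order formulas, and one may simply take $\pole$ to be the set of all processes: then $\truth{B}_\pole$ is the set of all closed $\lambda_c$-terms for every $B$, so the proof-like term $\lambda x.\,x$ realizes every formula and $\th\pole = \mathcal T$. So assume from now on that $\mathcal T$ is consistent. The construction (Section~\ref{section:construction}) enriches the $\lambda_c$-calculus with, for each $A \in \mathcal T$, one unrestricted instruction $\gamma_A$ — concretely $\gamma_A = \instrU_{n(A)}$ for a fixed injection $A \mapsto n(A)$, possible because $\mathcal T$ is countable — and then defines $\pole$ so that $\gamma_A$ is endowed with exactly the mixture of angelic (``may'') and demonic (``must'') nondeterminism needed to make $\gamma_A \in \truth{A}_\pole$. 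The subtlety here is that $\falsity{A}_\pole$ itself refers back to $\pole$, so this is a simultaneous definition of the pole and of its falsity values, to be set up as an appropriate fixed point; the restricted instructions $\instrR_n$ are available as auxiliary devices inside that definition and, being invisible to proof-like terms, do not enlarge $\th\pole$.

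Granting such a $\pole$, the equality $\th\pole = \mathcal T$ breaks into two inclusions. For $\mathcal T \subseteq \th\pole$ (Section~\ref{section:contains-T}): each $\gamma_A$ is proof-like — it is an unrestricted instruction and contains no stack constant — and realizes $A$ by construction, so $A \in \th\pole$, and there is nothing more to check. The reverse inclusion $\th\pole \subseteq \mathcal T$ (Section~\ref{section:contained-in-T}) is the heart of the matter, and the step I expect to be the main obstacle. Here the reduction I would use is that, $\mathcal T$ being closed under classical deduction, the completeness theorem of first-order logic gives $\mathcal T = \bigcap\{\,\th{\mathbb B} \mid \mathbb B \text{ a Boolean algebra with at least two elements and } \mathbb B \models \mathcal T\,\}$; it therefore suffices to show that for every such $\mathbb B$ and every closed formula $B$, if some proof-like term realizes $B$ with respect to $\pole$ then $\mathbb B \models B$. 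To obtain this, one analyses the finitely many instructions $\gamma_{A_1},\dots,\gamma_{A_k}$ occurring in a given proof-like realizer of $B$: their may/must profiles are calibrated precisely so that a term built from $\cc$, $\lambda$-abstraction, application and these instructions cannot distinguish the realizability model from a genuine model of $A_1 \band \dots \band A_k$, which forces $A_1 \band \dots \band A_k \models B$ and hence $B \in \mathcal T$, since $A_1,\dots,A_k \in \mathcal T$. Making this precise — bounding $\pole$ from above and transferring realizability with respect to $\pole$ into plain truth in $\mathbb B$ — is exactly what the careful balancing of ``may'' against ``must'' in the instructions $\gamma_A$ is meant to make possible, and it is where essentially all of the work lies.
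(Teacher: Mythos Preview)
Your outline matches the paper's approach closely: the easy direction via Lemma~\ref{lemma:logical-closure} and Corollary~\ref{cor:realize-ba}, the construction of a pole via nondeterministic instructions $\gamma_A$, the inclusion $\mathcal T \subseteq \th\pole$ from $\gamma_A$ realizing $A$, and the reverse inclusion by extracting $A_1 \land \dots \land A_k \models B$ from a proof-like realizer built out of $\gamma_{A_1},\dots,\gamma_{A_k}$. Your case split on consistency is harmless but unnecessary; the paper's construction is uniform.

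There is, however, one concrete missing idea in your setup. You propose to introduce a $\gamma_A$ only for $A \in \mathcal T$, as an unrestricted instruction. The paper instead assigns a $\gamma_A$ to \emph{every} first-order formula $A$ (closed or not): $\gamma_A$ is unrestricted when $A \in \mathcal T$ and restricted otherwise. This is not a cosmetic choice. The rule in the pole that makes $\gamma_A$ realize $A$ is recursive in $A$: decomposing $A$ two levels deep as in Section~\ref{subsec:structure-formulas}, the clause governing $\gamma_A \st t_1 \sc \ldots \sc t_m \sc \pi$ feeds each $t_i$ a stack headed by instructions $\gamma_{C_{i,j}[\cdots]}$ for the substituted sub-subformulas of $A$, and there is no reason these should lie in $\mathcal T$. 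Likewise, the argument for the reverse inclusion passes from ``$t$ realizes $A$'' to ``$\gamma_{A \limp \bot} \st t \sc \omega \in \pole$'', and $A \limp \bot$ is typically \emph{not} in $\mathcal T$. So the restricted instructions are not merely ``auxiliary devices inside the definition''; they are precisely the $\gamma_A$ for $A \notin \mathcal T$, and the whole mechanism depends on having a $\gamma_A$ for every $A$. Once this adjustment is made, the ``fixed point'' you allude to is realised concretely as an increasing union $\pole_{\mathcal T} = \bigcup_k \pole_{\mathcal T,k}$, and the reverse inclusion is proved by induction on $k$ (Proposition~\ref{prop:realizes-implies}); that induction is indeed where essentially all of the work lies, and your proposal does not yet supply it.
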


In particular, a first-order formula is universally realized if and only if it is true in every Boolean algebra with at least two elements.

We have already seen that that the second point implies the first. The task of the remainder of this paper will be to prove the converse implication.

\section{Constructing the pole}\label{section:construction}

From now on, $\mathcal T$ will denote a fixed first-order theory which is closed under classical deduction and contains $\mathcal T_{\operatorname{Bool}}$. We will construct a pole $\pole_{\mathcal T}$ whose theory is exactly $\mathcal T$.

For each first-order formula $A$ (closed or not), let $\gamma_A$ denote:
\begin{itemize}
\item one of the unrestricted instructions if $A \in \mathcal T$,
\item one of the restricted instructions otherwise.
\end{itemize}
Furthermore, let the $\gamma_A$ be pairwise distinct.

We will construct a pole $\pole_{\mathcal T}$ in such a way that $\gamma_A$ realizes $A$ for all closed first-order formulas $A$: this will imply that its theory contains $\mathcal T$. Then, we will prove the converse inclusion.

\subsection{The structure of first-order formulas}
\label{subsec:structure-formulas}

Any first-order formula $A$ can be decomposed as:
\[
\forall \multi{y}_1~ (B_1 \limp \ldots \limp \forall \multi{y}_{m}~ (B_{m} \limp \forall \multi{y}_{m+1}~ b \neq b') \ldots ),
\]
with $n \geq 0$, $B_1\cldotsc B_{m}$ first-order formulas, each $\multi{y}_{i}$ a list of variables, and $b,b'$ two first-order terms. Moreover, this decomposition is unique, up to renaming of the variables $\multi{y}_{1}\cldotsc \multi{y}_{m+1}$.

Each $B_i$ can itself be decomposed as:
\[
\forall \multi{z}_{i,1}\, (C_{i,1} {\limp} \ldots {\limp}  \forall \multi{z}_{i,n_i}\, (C_{i,n_i} {\limp} \forall \multi{z}_{i,n_i+1}\, c_i \neq c'_i) \ldots ),
\]
so that $A$ is decomposed as:

~\\
\begin{tikzpicture}

\node at (0,0) {$\forall \multi{y}_1~ (\dashedph \limp \ldots \limp \forall \multi{y}_{m}~ (\dashedph \limp \forall \multi{y}_{m+1}~ b \neq b') \ldots )$};

\draw [->, decorate, line width=0.7pt] (-2.49,-0.6) -- (-2.49,0.06);
\draw [decorate, line width=1pt, decoration = {brace, raise=8pt, aspect=0.203}] (-4.2,-1) --  (4.2,-1);
\node at (0,-1) {$\forall \multi{z}_{1,1}\, (C_{1,1} {\limp} \ldots {\limp}  \forall \multi{z}_{1,n_1}\, (C_{1,n_1} {\limp} \forall \multi{z}_{1,n_1+1}\, c_1 \neq c'_1) \ldots )$};

\draw [->, decorate, line width=0.7pt] (0.13,-1.6) -- (0.13,0.06);
\draw [decorate, line width=1pt, decoration = {brace, raise=8pt, aspect=0.515}] (-4.4,-2) --  (4.4,-2);
\draw [->] (-2.3,-2) -- (-2.1,-2);
\draw [->] (-1.65,-2) -- (-1.45,-2);
\draw [->] (0.45,-2) -- (0.65,-2);
\node at (0,-2) {$\forall \multi{z}_{m,1} (C_{m,1} ~\, \ldots ~\, \forall \multi{z}_{m,n_{m}} (C_{m,n_{m}} ~\,\forall \multi{z}_{m,n_{m}+1} c_{m} \neq c'_{m}) \ldots )$};
\end{tikzpicture}

Whenever we decompose a formula $A$ in this way, we will denote by $\multi{\multi{y}}$ the concatenated list $\allowbreak \multi{y}_1\cldotsc \multi{y}_{m+1}$, and for all $1 \leq i \leq m$, we will denote by $\multi{\multi{z}}_i$ the concatenated list $\multi{z}_{i,1}\cldotsc \multi{z}_{i,n_{i}+1}$. Furthermore, we will assume that the variables $\allowbreak \multi{\multi{y}},\allowbreak \multi{\multi{z}}_1\cldotsc \multi{\multi{z}}_m$ are chosen all different from one another and from the free variables of $A$.

\subsection{The pole $\pole_{\mathcal T}$}
\label{subsec:definition-pole}

We define by induction an increasing sequence $(\pole_{\mathcal T, k})_{k\in\mathbb N}$ of sets of processes: $\pole_{\mathcal T, 0}$ is empty, and for all $k$, $\pole_{\mathcal T, k+1}$ is the smallest set of processes such that:
\begin{itemize}
\item For all processes $p,q$, if $p \succ_1 q$ and $q \in \pole_{\mathcal T, k}$, then $p \in \pole_{\mathcal T, k+1}$;
\item For each closed first-order formula $A$ (decomposed as in section \ref{subsec:structure-formulas}), for all closed $\lambda_c$-terms $t_1, \ldots, t_m$, all stacks $\pi$ and all lists $\multi{\multi{\beta}} \in \{0,1\}$ such that $(b = b')\left[\multi{\multi{y}} := \multi{\multi{\beta}}\right]$ is true, if the set of processes
\[\arraycolsep=2pt\left\{\begin{array}{c}
t_i \st \gamma_{C_{i,1}\left[\multi{\multi{z_i}} := \multi{\multi{\delta_i}}, \multi{\multi{y}} := \multi{\multi{\beta}}\right]} \sc \ldots \sc \gamma_{C_{i,n_i}\left[\multi{\multi{z_i}} := \multi{\multi{\delta_i}}, \multi{\multi{y}} := \multi{\multi{\beta}}\right] } \sc \pi;\\

i \leq m \text{ and } \multi{\multi{\delta_i}} \in \{0,1\} \text{ such that}\\

(c_i = c'_i)\left[\multi{\multi{z_i}} := \multi{\multi{\delta_i}}, \multi{\multi{y}} := \multi{\multi{\beta}}\right] \text{ is true}
\end{array}\right\}\]
is included in $\pole_{\mathcal T, k}$, then the process
\[
\gamma_A \st t_1 \sc \ldots \sc t_m \sc \pi
\]
is in $\pole_{\mathcal T, k+1}$.
\end{itemize}

Then, we define the pole $\pole_{\mathcal T}$ as the directed union\\ $\bigcup_{k\in\mathbb N} \pole_{\mathcal T,k}$.

\begin{remark} The rule for the instructions $\gamma_A$ have the following general shape:
\[ \arraycolsep=0pt \begin{array}{l}
\text{If there exists } \multi{\multi \beta} \text{ such that for all } i,\multi{\multi \delta}_i \text{, }
	t_i \st \pi'_{\multi{\multi \beta}, i, \multi{\multi \delta}_i} \text{ is in } \pole_{\mathcal T} \text{,}\\ 
\text{then } \gamma_A \st t_1 \sc \ldots \sc t_m \sc \pi \text{ is in } \pole_{\mathcal T} \text{.}
\end{array} \]
It has been pointed out \cite{geoffroy:nondeterminism} that such a rule can be interpreted as saying that $\gamma_A$ is a special kind of nondeterministic instruction: part ``may'' (because of the existential quantification), and part ``must'' (because of the universal quantification).
\end{remark}

\section{The theory of $\pole_{\mathcal T}$ contains $\mathcal T$}\label{section:contains-T}

We wish to prove that $\th{\pole_{\mathcal T}}$ contains $\mathcal T$. Since $\gamma_A$ is proof-like whenever $A$ is in $\mathcal T$, it is sufficient to prove the following result:

\begin{proposition} For all closed first-order formulas $A$, $\gamma_A$ realizes $A$ with respect to $\pole_{\mathcal T}$.
\end{proposition}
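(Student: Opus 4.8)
The plan is to prove this by (strong) induction on the number $\ell(A)$ of occurrences of $\limp$ and $\forall$ in $A$. Two observations make this induction go through. First, $\ell$ is unchanged when one substitutes a first-order variable by $0$ or $1$, since such a substitution only rewrites terms. Second, each formula $C_{i,j}$ occurring in the decomposition of $A$ recalled in Section~\ref{subsec:structure-formulas} is a proper subformula of the corresponding $B_i$, which is itself a proper subformula of $A$; hence $\ell(C_{i,j}[\multi{\multi{z}}_i := \multi{\multi{\delta}}_i, \multi{\multi{y}} := \multi{\multi{\beta}}]) = \ell(C_{i,j}) < \ell(A)$ for all lists $\multi{\multi{\delta}}_i, \multi{\multi{\beta}}$ over $\{0,1\}$. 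When the decomposition of $A$ contains no $C_{i,j}$ — in particular when $m = 0$ — the argument below invokes no induction hypothesis; these are the base cases.

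The first ingredient is a routine unfolding of falsity values. Using the clauses for $\forall$, $\limp$ and $\neq$, a stack lies in $\falsity{A}_{\pole_{\mathcal T}}$ exactly when it has the form $t_1 \sc \ldots \sc t_m \sc \pi$ for some closed $\lambda_c$-terms $t_1, \ldots, t_m$, some stack $\pi$ and some list $\multi{\multi{\beta}}$ over $\{0,1\}$ such that $(b = b')[\multi{\multi{y}} := \multi{\multi{\beta}}]$ holds and each $t_i$ realizes $B_i[\multi{\multi{y}} := \multi{\multi{\beta}}]$ with respect to $\pole_{\mathcal T}$ — the truth of $(b = b')[\multi{\multi{y}} := \multi{\multi{\beta}}]$ being precisely what makes the innermost falsity value $\falsity{(b \neq b')[\multi{\multi{y}} := \multi{\multi{\beta}}]}_{\pole_{\mathcal T}}$ the set of all stacks, so that it contains $\pi$. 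The same unfolding applied to each $B_i[\multi{\multi{y}} := \multi{\multi{\beta}}]$ shows that a stack of the shape $u_1 \sc \ldots \sc u_{n_i} \sc \pi$ belongs to $\falsity{B_i[\multi{\multi{y}} := \multi{\multi{\beta}}]}_{\pole_{\mathcal T}}$ as soon as there is a list $\multi{\multi{\delta}}_i$ over $\{0,1\}$ with $(c_i = c'_i)[\multi{\multi{z}}_i := \multi{\multi{\delta}}_i, \multi{\multi{y}} := \multi{\multi{\beta}}]$ true and each $u_j$ realizes $C_{i,j}[\multi{\multi{z}}_i := \multi{\multi{\delta}}_i, \multi{\multi{y}} := \multi{\multi{\beta}}]$. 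For such a $\multi{\multi{\delta}}_i$, the formula $C_{i,j}[\multi{\multi{z}}_i := \multi{\multi{\delta}}_i, \multi{\multi{y}} := \multi{\multi{\beta}}]$ is closed and has $\ell$-measure strictly below $\ell(A)$, so the induction hypothesis gives that $\gamma_{C_{i,j}[\multi{\multi{z}}_i := \multi{\multi{\delta}}_i, \multi{\multi{y}} := \multi{\multi{\beta}}]}$ realizes it; taking each $u_j$ to be this instruction yields a concrete stack in $\falsity{B_i[\multi{\multi{y}} := \multi{\multi{\beta}}]}_{\pole_{\mathcal T}}$, namely the one made of the $n_i$ instructions $\gamma_{C_{i,j}[\multi{\multi{z}}_i := \multi{\multi{\delta}}_i, \multi{\multi{y}} := \multi{\multi{\beta}}]}$ (for $1 \leq j \leq n_i$, in order) followed by $\pi$.

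With these two observations, I would fix an arbitrary stack $t_1 \sc \ldots \sc t_m \sc \pi \in \falsity{A}_{\pole_{\mathcal T}}$ together with a witnessing list $\multi{\multi{\beta}}$ as above, and apply the second clause in the definition of $\pole_{\mathcal T, k+1}$ (Section~\ref{subsec:definition-pole}) to $\gamma_A$ with these $t_1, \ldots, t_m$, this $\pi$ and this $\multi{\multi{\beta}}$. The hypothesis of that clause is that the set $S$ there displayed — whose members are, for each $i \leq m$ and each list $\multi{\multi{\delta}}_i$ over $\{0,1\}$ making $(c_i = c'_i)[\multi{\multi{z}}_i := \multi{\multi{\delta}}_i, \multi{\multi{y}} := \multi{\multi{\beta}}]$ true, the process $t_i$ applied to the stack of instructions $\gamma_{C_{i,j}[\multi{\multi{z}}_i := \multi{\multi{\delta}}_i, \multi{\multi{y}} := \multi{\multi{\beta}}]}$ ($1 \leq j \leq n_i$) followed by $\pi$ — is included in $\pole_{\mathcal T, k}$. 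By the previous paragraph that stack lies in $\falsity{B_i[\multi{\multi{y}} := \multi{\multi{\beta}}]}_{\pole_{\mathcal T}}$, and $t_i$ realizes $B_i[\multi{\multi{y}} := \multi{\multi{\beta}}]$ (part of what $t_1 \sc \ldots \sc t_m \sc \pi \in \falsity{A}_{\pole_{\mathcal T}}$ supplies), so the process is in $\pole_{\mathcal T}$ by definition of realizability; hence $S \subseteq \pole_{\mathcal T}$. Since $S$ is finite — its index set embeds into $\{1, \ldots, m\} \times \bigcup_i \{0,1\}^{|\multi{\multi{z}}_i|}$ — and $\pole_{\mathcal T} = \bigcup_k \pole_{\mathcal T, k}$ with the $\pole_{\mathcal T, k}$ increasing, there is a single $k$ with $S \subseteq \pole_{\mathcal T, k}$; the clause then puts $\gamma_A \st t_1 \sc \ldots \sc t_m \sc \pi$ into $\pole_{\mathcal T, k+1} \subseteq \pole_{\mathcal T}$. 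As the stack was arbitrary, $\gamma_A$ realizes $A$. The degenerate cases ($m = 0$, where $S = \emptyset \subseteq \pole_{\mathcal T, 0}$, and, more generally, all $n_i = 0$, where each stack in question is just $\pi$ and no induction hypothesis is needed) are covered by the same argument.

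Because $\pole_{\mathcal T}$ was built with exactly this statement in mind, I do not expect a genuine obstacle; the whole proof is careful bookkeeping. The step demanding the most attention is the falsity-value unfolding, and within it the quantifier alternation: the clause for $\forall$ interleaves the choice of the booleans $\multi{\multi{\beta}}$ (respectively $\multi{\multi{\delta}}_i$) with the choices of the terms $t_i$ (respectively the realizers of the $C_{i,j}$), so one must check that these choices can be made so that a single list $\multi{\multi{\beta}}$ (respectively $\multi{\multi{\delta}}_i$) works uniformly. The other point worth stating explicitly is the finiteness of $S$: it is what lets one replace ``$S \subseteq \pole_{\mathcal T}$'' by ``$S \subseteq \pole_{\mathcal T, k}$ for some $k$'' and thereby use the stratified construction of $\pole_{\mathcal T}$ in place of a self-referential appeal to $\pole_{\mathcal T}$ itself.
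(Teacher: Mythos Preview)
Your proof is correct and follows the same route as the paper's: induction on a size measure, unfolding of the falsity value of $A$ via the two-level decomposition, application of the induction hypothesis to each $C_{i,j}[\ldots]$ to obtain realizers, and invocation of the $\gamma_A$-clause in the definition of the pole. The only differences are presentational: the paper inducts on the height of $A$ rather than your count $\ell(A)$ of $\limp$/$\forall$ occurrences (both work, for the same reason---substitution of booleans leaves the measure unchanged and each $C_{i,j}$ is a proper subformula), and the paper simply writes ``by definition of $\pole_{\mathcal T}$'' at the end, whereas you spell out the finiteness-of-$S$ argument needed to pass from $S \subseteq \pole_{\mathcal T}$ to $S \subseteq \pole_{\mathcal T,k}$ for a single $k$.
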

\begin{proof}
We proceed by induction on the height of $A$. Let $A$ be decomposed as in $\ref{subsec:structure-formulas}$.

Let $t_1 \sc \ldots \sc t_n \sc \pi$ be in the falsity value of $A$. In other words, let $\multi{\multi{\beta}}$ be a list of elements of $\{0,1\}$, let $t_1, \ldots t_n$ be closed $\lambda_c$-terms such that $t_i$ realizes $B_i[\multi{\multi{y}} := \multi{\multi{\beta}}]$ for all $i$, and let $\pi$ be in  the falsity value of $(b \neq b')[\multi{\multi{y}} := \multi{\multi{\beta}}]$, which is the same as saying that $(b = b')[\multi{\multi{y}} := \multi{\multi{\beta}}]$ is true. All we need to do is prove that
\[\gamma_A \st t_1 \sc \ldots \sc t_m \sc \pi\]
is in $\pole_{\mathcal T}$.

Let $i \leq m$ and $\multi{\multi{\delta_i}} \in \{0,1\}$ be such that $(c_i = c'_i)[\multi{\multi{z_i}} := \multi{\multi{\delta_i}}.,\allowbreak. \multi{\multi{y}} := \multi{\multi{\beta}}]$ is true. By the induction hypothesis, we know that for all $j \leq n_i$, $\gamma_{C_{i,j}[\multi{\multi{z_i}} := \multi{\multi{\delta_i}}, \multi{\multi{y}} := \multi{\multi{\beta}}]}$ realizes $C_{i,j}[\multi{\multi{z_i}} := \multi{\multi{\delta_i}}, \multi{\multi{y}} := \multi{\multi{\beta}}]$. Since $t_i$ realizes $B_i[\multi{\multi{y}} := \multi{\multi{\beta}}]$ and $\pi$ is in the falsity value of $(c_i \neq c'_i)[\multi{\multi{z_i}} := \multi{\multi{\delta_i}}, \multi{\multi{y}} := \multi{\multi{\beta}}]$ (because this inequality is false), we have that
\[t_i \st \gamma_{C_{i,1}[\multi{\multi{z_i}} := \multi{\multi{\delta_i}}, \multi{\multi{y}} := \multi{\multi{\beta}}]} \sc \ldots \sc \gamma_{C_{i,n_i}[\multi{\multi{z_i}} := \multi{\multi{\delta_i}}, \multi{\multi{y}} := \multi{\multi{\beta}}] } \sc \pi\]
is in $\pole_{\mathcal T}$.

Therefore, by definition of $\pole_{\mathcal T}$, $\gamma_A \st t_1 \sc \ldots \sc t_m \sc \pi$ is in $\pole_{\mathcal T}$.

\end{proof}

\section{The theory of $\pole_{\mathcal T}$ is contained in $\mathcal T$}\label{section:contained-in-T}

All that remains is to prove that The theory $\th{\pole_{\mathcal T}}$ is contained in $\mathcal T$.

For all $\lambda_c$-terms $t$ (respectively, all stacks $\pi$; all processes $p$), let $\mathcal C_t$ (respectively, $\mathcal C_\pi$; $\mathcal C_p$) denote the conjunction of all first-order formulas $A$ such that $\gamma_A$ appears anywhere in $t$ (respectively, in $\pi$; in $p$)--including nested within a stack constant $k_\pi$. Note that $\mathcal C_t$ is not necessarily closed even if $t$ is (because the former notion of closure is about first-order variables, while the latter is about variables of the $\lambda_c$-calculus).

We are going to prove that for all processes $p$ such that $\mathcal C_p$ is closed, if $p$ is in $\pole_{\mathcal T}$, then $p$ must contain a contradiction, in the sense that $\mathcal C_p$ must be false in all Boolean algebras with at least two elements.

In order to prove this, we will need to state and prove a more general result that also covers the case when $\mathcal C_p$ is not closed. To that end, we will need the following notation: for all $\lambda_c$-terms $t$, all lists $\multi{z}$ of first-order variables, and all lists $\multi{b}$ of first-order terms, we denote by $t[\multi{z} := \multi{b}]$ the $\lambda_c$-term obtained by replacing each instruction of the form $\gamma_A$ by $\gamma_{A[\multi{z} := \multi{b}]}$ (including when they appear nested within a stack constant). Similarly, we define $\pi[\multi{z} := \multi{b}]$ when $\pi$ is a stack, and $p[\multi{z} := \multi{b}]$ when $p$ is a process.

\begin{proposition} \label{prop:realizes-implies}
Let $p$ be a process, $\multi{a} = a_1, \ldots, a_r$ and $\multi{a}' = a_1', \ldots, a_r'$ two lists of first-order terms, and $\multi{w}$ a list of distinct first-order variables that contains all the free variables of $\mathcal C_p$, $\multi{a}$ and $\multi{a}'$.

Assume that for all lists $\multi{\alpha}$ of elements of $\{0,1\}$ such that $(\multi{a} = \multi{a}')[\multi{w} := \multi{\alpha}]$ is true, $p[\multi{w} := \multi{\alpha}]$ is in $\pole_{\mathcal T}$ (where $\multi{a} = \multi{a}'$ denotes the conjunction $(a_1 = a_1') \wedge \ldots \wedge (a_r = a_r')$).

Then the first-order formula $\exists \multi{w}~ (\mathcal C_p \wedge (\multi{a} = \multi{a}'))$ is false in all Boolean algebras with at least two elements .
\end{proposition}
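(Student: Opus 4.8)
The plan is to prove, by induction on $k\in\mathbb N$, the variant of the statement in which $\pole_{\mathcal T}$ is replaced everywhere by $\pole_{\mathcal T,k}$; since the length of $\multi w$ is fixed there are only finitely many relevant lists $\multi\alpha$, and $\pole_{\mathcal T}=\bigcup_k\pole_{\mathcal T,k}$ is a directed union, so the proposition follows. Write $S$ for the finite set of lists $\multi\alpha$ over $\{0,1\}$ with $(\multi a=\multi a')[\multi w:=\multi\alpha]$ true in $\{0,1\}$. I would first dispatch the case $S=\emptyset$ — which in particular settles $k=0$, since $\pole_{\mathcal T,0}=\emptyset$ forces $S=\emptyset$: there $\multi a$ is nonempty (the empty conjunction being true), and the goal clause $\forall\multi w\,(a_1=a_1'\limp\cdots\limp a_{r-1}=a_{r-1}'\limp a_r\neq a_r')$ is true in $\{0,1\}$, hence true in all Boolean algebras with at least two elements by Lemma~\ref{lemma:horn-ba}; so $\exists\multi w\,(\multi a=\multi a')$, and a fortiori $\exists\multi w\,(\mathcal C_p\wedge\multi a=\multi a')$, is false in all of them. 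From now on $S\neq\emptyset$, so $k\geq1$; the main work will be the case where $p$ is headed by an instruction $\gamma_A$, and I treat the other cases first.

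For the inductive step I would use three elementary facts: applying $[\multi w:=\multi\alpha]$ to a process or formula leaves its syntactic ``shape'' unchanged (it only rewrites the formulas decorating the instructions $\gamma$) and commutes with $\succ_1$; the Krivine machine is deterministic; and rule (ii) in the definition of $\pole_{\mathcal T,k}$ produces only processes headed by an instruction $\gamma_A$, so a reducible process lands in $\pole_{\mathcal T,k}$ only via rule (i), i.e.\ with its unique reduct already in $\pole_{\mathcal T,k-1}$. Writing $p=t\st\pi$: if $p$ is irreducible and $t$ is not an instruction, then no $p[\multi w:=\multi\alpha]$ lies in any $\pole_{\mathcal T,j}$, contradicting $S\neq\emptyset$ (vacuous case). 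If $p\succ_1 q$, then $q[\multi w:=\multi\alpha]\in\pole_{\mathcal T,k-1}$ for all $\multi\alpha\in S$; inspecting the four machine rules shows that the instructions occurring in $q$ are among those occurring in $p$, so $\mathcal C_p$ entails $\mathcal C_q$ and $\multi w$ still covers the free variables of $\mathcal C_q$, and the induction hypothesis for $q$ gives that $\exists\multi w\,(\mathcal C_q\wedge\multi a=\multi a')$, hence $\exists\multi w\,(\mathcal C_p\wedge\multi a=\multi a')$, is false in all Boolean algebras with at least two elements.

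The remaining and main case is $p=\gamma_A\st u_1\sc\cdots\sc u_m\sc\pi_0$, with $A$ decomposed as in Section~\ref{subsec:structure-formulas} — taking the bound variables $\multi{\multi y},\multi{\multi z}_1,\ldots,\multi{\multi z}_m$ fresh for $\multi w$ — so that $\mathcal C_p=A\wedge E$, where $E$ is the conjunction of the formulas decorating the instructions inside $u_1,\ldots,u_m,\pi_0$ (so $E$ has free variables among $\multi w$); if the stack of $p$ has fewer than $m$ arguments the case is vacuous as before. For each $\multi\alpha\in S$, $p[\multi w:=\multi\alpha]$ is irreducible, hence enters $\pole_{\mathcal T,k}$ by rule (ii): there is a list $\multi{\multi\beta}^{\multi\alpha}$ over $\{0,1\}$ with $(b=b')[\multi w:=\multi\alpha,\multi{\multi y}:=\multi{\multi\beta}^{\multi\alpha}]$ true, and for all $i$ and all $\multi{\multi\delta}$ making $(c_i=c_i')[\multi w:=\multi\alpha,\multi{\multi y}:=\multi{\multi\beta}^{\multi\alpha},\multi{\multi z}_i:=\multi{\multi\delta}]$ true, the process $q_i[\multi w:=\multi\alpha,\multi{\multi y}:=\multi{\multi\beta}^{\multi\alpha},\multi{\multi z}_i:=\multi{\multi\delta}]$ lies in $\pole_{\mathcal T,k-1}$, where $q_i:=u_i\st\gamma_{C_{i,1}}\sc\cdots\sc\gamma_{C_{i,n_i}}\sc\pi_0$. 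The crucial device is to encode the assignment $\multi\alpha\mapsto\multi{\multi\beta}^{\multi\alpha}$ by first-order terms: as $S$ is finite, each coordinate $l$ of $\multi{\multi y}$ admits a term $\beta_l(\multi w)$ — a disjunction of conjunctions of literals $w_{l'}$ or $\lnot w_{l'}$, one conjunct per $\multi\alpha\in S$ with $\beta^{\multi\alpha}_l=1$ — such that the list $\multi{\multi\beta}(\multi w)$ of these terms satisfies $\multi{\multi\beta}(\multi\alpha)=\multi{\multi\beta}^{\multi\alpha}$ for all $\multi\alpha\in S$. I would then apply the induction hypothesis to the process $\hat q_i:=q_i[\multi{\multi y}:=\multi{\multi\beta}(\multi w)]$, to the term lists obtained by appending $c_i[\multi{\multi y}:=\multi{\multi\beta}(\multi w)]$ to $\multi a$ and $c_i'[\multi{\multi y}:=\multi{\multi\beta}(\multi w)]$ to $\multi a'$, and to the variable list $\multi w,\multi{\multi z}_i$: the hypothesis needed is exactly what rule (ii) provides, since $(\multi a=\multi a')[\multi w:=\multi\alpha]$ true means $\multi\alpha\in S$, and then $\multi{\multi\beta}(\multi\alpha)=\multi{\multi\beta}^{\multi\alpha}$ and $\hat q_i[\multi w:=\multi\alpha,\multi{\multi z}_i:=\multi{\multi\delta}]=q_i[\multi w:=\multi\alpha,\multi{\multi y}:=\multi{\multi\beta}^{\multi\alpha},\multi{\multi z}_i:=\multi{\multi\delta}]$. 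Pulling the quantifiers out of the Horn-style decomposition of $B_i$ and weakening the premise suitably, the conclusion rearranges to: for all $\multi w$, the formula $\multi a=\multi a'\wedge E\limp B_i[\multi{\multi y}:=\multi{\multi\beta}(\multi w)]$ holds in all Boolean algebras with at least two elements, for every $i$. Separately, the true-in-$\{0,1\}$ equations $(b=b')[\multi w:=\multi\alpha,\multi{\multi y}:=\multi{\multi\beta}^{\multi\alpha}]$ for $\multi\alpha\in S$ are exactly the assertion that the definite clause $\forall\multi w\,(\multi a=\multi a'\limp b[\multi{\multi y}:=\multi{\multi\beta}(\multi w)]=b'[\multi{\multi y}:=\multi{\multi\beta}(\multi w)])$ holds in $\{0,1\}$, hence, by Lemma~\ref{lemma:horn-ba}, in all Boolean algebras with at least two elements. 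To finish: let $\mathbb B$ be such an algebra, $\multi e\in\mathbb B$ with $\mathbb B\models(\mathcal C_p\wedge\multi a=\multi a')[\multi w:=\multi e]$, and set $\multi{\multi h}:=\multi{\multi\beta}(\multi e)$ computed in $\mathbb B$; then $\mathbb B\models(E\wedge\multi a=\multi a')[\multi w:=\multi e]$ and the first family of facts give $\mathbb B\models B_i[\multi w:=\multi e,\multi{\multi y}:=\multi{\multi h}]$ for all $i$, so $\mathbb B\models A[\multi w:=\multi e]$ forces $\mathbb B\models(b\neq b')[\multi w:=\multi e,\multi{\multi y}:=\multi{\multi h}]$, whereas the second fact together with $\mathbb B\models(\multi a=\multi a')[\multi w:=\multi e]$ forces $\mathbb B\models(b=b')[\multi w:=\multi e,\multi{\multi y}:=\multi{\multi h}]$ — a contradiction. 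So $\exists\multi w\,(\mathcal C_p\wedge\multi a=\multi a')$ is false in $\mathbb B$, which closes the induction.

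I expect the hard part to be exactly this last case. The difficulty is that the formulas decorating the $\gamma$'s are arbitrary, of unbounded quantifier complexity, so their truth cannot be transported along Boolean-algebra homomorphisms or embeddings; the term encoding $\multi{\multi\beta}(\multi w)$ of the witness assignment is what lets the induction hypothesis be invoked with the outer variables $\multi w$ kept free, which is in turn what makes the final contradiction available inside an arbitrary Boolean algebra rather than only in finite powers of $\{0,1\}$.
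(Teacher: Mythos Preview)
Your proposal is correct and follows essentially the same approach as the paper: induction on $k$, the Horn-clause observation to dispose of the case where no $\multi\alpha$ satisfies the constraints, the reduction case via the fact that $\succ_1$ cannot introduce new $\gamma$-instructions, and in the $\gamma_A$ case the key device of encoding the witness assignment $\multi\alpha\mapsto\multi{\multi\beta}^{\multi\alpha}$ by Boolean terms in $\multi w$ before invoking the induction hypothesis and Lemma~\ref{lemma:horn-ba}. The only differences are presentational---you are more explicit about determinism, substitution commuting with $\succ_1$, and the vacuous short-stack case, and you phrase the final step semantically in an arbitrary $\mathbb B$ where the paper manipulates formulas---but the argument is the same.
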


\begin{corollary}
Let $t$ be a closed $\lambda_c$-term such that $\mathcal C_t$ is closed, and $A$ a closed first-order formula. If $t$ realizes $A$ with respect to $\pole_{\mathcal T}$, then the formula $\mathcal C_t \limp A$ is true in all Boolean algebras with at least two elements.
\end{corollary}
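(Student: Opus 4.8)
The plan is to deduce the corollary from Proposition~\ref{prop:realizes-implies}, applied to a process built out of $t$ that exposes the structure of $A$. First I would write $A$ in the decomposition of Section~\ref{subsec:structure-formulas}, $A = \forall \multi{y}_1\,(B_1 \limp \ldots \limp \forall \multi{y}_{m}\,(B_{m} \limp \forall \multi{y}_{m+1}\, b \neq b')\ldots)$, with the bound variables $\multi{\multi y}$ chosen fresh. Since $A$ is closed, the free variables of each $B_i$ and of $b,b'$ all lie in $\multi{\multi y}$; and since $B_i$ is outside the scope of $\forall \multi{y}_j$ for $j > i$, the formula $A$ is logically equivalent to the prenex form $\forall \multi{\multi y}\,((B_1 \land \ldots \land B_m) \limp b \neq b')$, a fact I would use at the very end.

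Next I would set $p := t \st \gamma_{B_1} \sc \ldots \sc \gamma_{B_m} \sc \omega$ and aim to apply Proposition~\ref{prop:realizes-implies} with variable list $\multi{\multi y}$ and the single equation $b = b'$. As $\mathcal C_t$ is closed, $\mathcal C_p = \mathcal C_t \land B_1 \land \ldots \land B_m$, whose free variables --- like those of $b$ and $b'$ --- are among $\multi{\multi y}$, so the side conditions of the proposition hold. The substance is verifying its main hypothesis: that $p[\multi{\multi y} := \multi{\multi\beta}] \in \pole_{\mathcal T}$ whenever $\multi{\multi\beta} \in \{0,1\}$ makes $(b = b')[\multi{\multi y} := \multi{\multi\beta}]$ true. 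For such a $\multi{\multi\beta}$, each $B_i[\multi{\multi y} := \multi{\multi\beta}]$ is closed, so by the proposition of Section~\ref{section:contains-T} the term $\gamma_{B_i[\multi{\multi y} := \multi{\multi\beta}]}$ realizes it; and $(b \neq b')[\multi{\multi y} := \multi{\multi\beta}]$ being false, its falsity value is the set of all stacks and in particular contains $\omega$. Unfolding the falsity value of $A$ one $\forall$ and one $\limp$ at a time then shows $\gamma_{B_1[\multi{\multi y} := \multi{\multi\beta}]} \sc \ldots \sc \gamma_{B_m[\multi{\multi y} := \multi{\multi\beta}]} \sc \omega \in \falsity{A}_{\pole_{\mathcal T}}$; and since $t$ realizes $A$ while $t[\multi{\multi y} := \multi{\multi\beta}] = t$ (because $\mathcal C_t$ is closed), this gives $p[\multi{\multi y} := \multi{\multi\beta}] \in \pole_{\mathcal T}$.

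The proposition then yields that $\exists \multi{\multi y}\,(\mathcal C_t \land B_1 \land \ldots \land B_m \land b = b')$ is false in every Boolean algebra with at least two elements; since $\mathcal C_t$ is closed this says precisely that $\mathcal C_t \limp \forall \multi{\multi y}\,((B_1 \land \ldots \land B_m) \limp b \neq b')$ holds in every such algebra, which by the equivalence above is $\mathcal C_t \limp A$. The only delicate point is the choice of $p$: it has to record in $\mathcal C_p$ exactly the conjunction $B_1 \land \ldots \land B_m$ of the decomposition, which is why $t$ is applied to the stack of the $\gamma_{B_i}$'s, capped off by $\omega$ (legitimate because a false inequality has full falsity value); the rest is substitution bookkeeping.

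The real work --- and the place I would expect the main obstacle --- lies in Proposition~\ref{prop:realizes-implies} itself. I would prove it by well-founded induction on the least $k$ such that $p[\multi w := \multi\alpha] \in \pole_{\mathcal T,k}$ for all $\multi\alpha$ with $(\multi a = \multi a')[\multi w := \multi\alpha]$ true (there are finitely many such $\multi\alpha$, so this is a natural number; if there are none, the conclusion holds because then $\multi a = \multi a'$ already fails identically in $\{0,1\}$, hence, by Lemma~\ref{lemma:horn-ba}, in all Boolean algebras with at least two elements). The cases where $p$ reduces, or is stuck on a head other than some $\gamma_A$, are handled by anti-reduction and the induction hypothesis (or the ``no valid $\multi\alpha$'' remark), using that reduction is deterministic so that a non-$\gamma$ head can enter $\pole_{\mathcal T}$ only via anti-reduction. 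The hard case is $p = \gamma_A \st t_1 \sc \ldots \sc t_m \sc \pi$: since $\gamma_A \st \ldots$ does not reduce, each valid $\multi\alpha$ forces the $\gamma_A$-rule to have fired, and hence supplies an existential (``angelic'') witness $\multi{\multi\beta}^{\multi\alpha} \in \{0,1\}$; to obtain a statement valid in \emph{every} Boolean algebra one must encode the finite table $\multi\alpha \mapsto \multi{\multi\beta}^{\multi\alpha}$ as a tuple $\multi e$ of Boolean terms in $\multi w$, substitute $\multi{\multi y} := \multi e$ throughout before applying the induction hypothesis to the premises of the rule, and finally recombine the resulting implications --- together with the definite Horn clause $\multi a = \multi a' \limp b[\multi{\multi y}:=\multi e] = b'[\multi{\multi y}:=\multi e]$ --- via Lemma~\ref{lemma:horn-ba}.
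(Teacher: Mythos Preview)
Your argument is correct, but it does more work than necessary. The paper's proof is a one-liner: rather than decomposing $A$ and building the process $t \st \gamma_{B_1} \sc \ldots \sc \gamma_{B_m} \sc \omega$ with the bound variables $\multi{\multi y}$ as parameters, it simply takes $p = \gamma_{A \limp \bot} \st t \sc \omega$ and applies Proposition~\ref{prop:realizes-implies} with \emph{empty} $\multi w$, $\multi a$, $\multi a'$. Since $\gamma_{A\limp\bot}$ realizes $A\limp\bot$ (Section~\ref{section:contains-T}), $t$ realizes $A$, and $\omega\in\falsity{\bot}$, the process $p$ is in $\pole_{\mathcal T}$; and $\mathcal C_p$ is just $\mathcal C_t \wedge (A\limp\bot)$, already closed. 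The proposition then says $\mathcal C_t \wedge (A\limp\bot)$ is false in every Boolean algebra with at least two elements, which is exactly $\mathcal C_t \limp A$.

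The difference is that you put $t$ at the head and the structure of $A$ on the stack, which forces you to carry the bound variables of $A$ through the proposition and then undo the prenex transformation at the end; the paper puts a single $\gamma$ at the head and $t$ on the stack, so no free variables appear and no logical rearrangement of $A$ is needed. Both routes are valid, but the paper's trick of wrapping $A$ inside $A\limp\bot$ and letting the instruction $\gamma_{A\limp\bot}$ absorb the whole formula is the cleaner move and worth remembering. Your final paragraph sketching the proof of Proposition~\ref{prop:realizes-implies} is accurate and matches the paper's argument (induction on $k$, deterministic reduction for the non-$\gamma$ heads, and encoding the angelic choices $\multi{\multi\beta}^{\multi\alpha}$ by Boolean terms $\multi{\multi e}$ before invoking Lemma~\ref{lemma:horn-ba}), but it belongs to the proposition, not to this corollary.
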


\begin{proof}
If $t$ realizes $A$ with respect to $\pole_{\mathcal T}$, then $\gamma_{A \limp \bot} \st t \sc \omega$ is in $\pole_{\mathcal T}$, therefore $\mathcal C_t \wedge (A \limp \bot)$ is false in all Boolean algebras with at least two elements.
\end{proof}

\begin{corollary}
The theory $\th{\pole_{\mathcal T}}$ is contained in $\mathcal T$.
\end{corollary}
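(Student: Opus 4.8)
The plan is to read the inclusion off from the corollary just proved, using the construction of $\pole_{\mathcal T}$ together with the two standing hypotheses on $\mathcal T$ (closure under classical deduction, and $\mathcal T \supseteq \mathcal{T}_{\operatorname{Bool}}$). First I would fix an arbitrary closed first-order formula $A \in \th{\pole_{\mathcal T}}$ and unfold the definition of the first-order theory of a pole: there is a proof-like closed $\lambda_c$-term $t$ that realizes $A$ with respect to $\pole_{\mathcal T}$. The goal is then to show $A \in \mathcal T$.

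The second step is to analyse $\mathcal C_t$. A $\lambda_c$-term is a finite syntactic object, so only finitely many instructions of the form $\gamma_B$ occur in $t$, and $\mathcal C_t$ is a finite conjunction (with the empty conjunction understood as a fixed valid formula, say $0 = 0$, when no such instruction occurs). Now, being \emph{proof-like} means that $t$ contains no restricted instruction $\instrR_n$; but by the definition of the $\gamma$'s, $\gamma_B$ is a restricted instruction whenever $B \notin \mathcal T$, and in particular whenever $B$ fails to be closed, since $\mathcal T$ consists of closed formulas. Hence every conjunct of $\mathcal C_t$ is a closed formula lying in $\mathcal T$; in particular $\mathcal C_t$ is closed, and every model of $\mathcal T$ satisfies $\mathcal C_t$.

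The third step is to apply the preceding corollary to $t$ and $A$: since $\mathcal C_t$ is closed and $t$ realizes $A$ with respect to $\pole_{\mathcal T}$, the formula $\mathcal C_t \limp A$ is true in every Boolean algebra with at least two elements. Because $\mathcal T \supseteq \mathcal{T}_{\operatorname{Bool}}$, every model of $\mathcal T$ is such a Boolean algebra and hence satisfies $\mathcal C_t \limp A$; combined with the second step, every model of $\mathcal T$ satisfies $A$. Thus $\mathcal T$ implies $A$ in the first-order semantic sense, and since $\mathcal T$ is closed under classical deduction, $A \in \mathcal T$, which is what we want.

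I do not expect a real obstacle here: all the substance lives in Proposition~\ref{prop:realizes-implies} and the corollary above. The only point that calls for slight care is the bookkeeping in the second step, namely that being proof-like is precisely the condition forcing every $\gamma_B$ occurring in $t$ to range over \emph{closed} formulas of $\mathcal T$, which is exactly what bridges the gap between ``$\mathcal C_t \limp A$ holds in every Boolean algebra with at least two elements'' and ``$A$ holds in every model of $\mathcal T$''.
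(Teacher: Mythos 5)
Your proposal is correct and follows essentially the same route as the paper: extract a proof-like realizer $t$, note that proof-likeness forces every $\gamma_B$ occurring in $t$ to have $B \in \mathcal T$ (so $\mathcal C_t$ is closed and entailed by $\mathcal T$), apply the preceding corollary to get that $\mathcal C_t \limp A$ holds in all Boolean algebras with at least two elements, and conclude $A \in \mathcal T$ by closure under classical deduction. The only difference is cosmetic: the paper states directly that $\mathcal C_t$ and $\mathcal C_t \limp A$ lie in $\mathcal T$, whereas you route the same facts through semantic entailment before invoking deductive closure.
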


\begin{proof}
Let $A \in \th{\pole_{\mathcal T}}$. Let $t$ be a proof-like term which realizes $A$. The formula $\mathcal C_t$ is in $\mathcal T$ by construction, because $t$ is proof-like. By the previous corollary, the formula $\mathcal C_t \limp A$ is also in $\mathcal T$, therefore $A$ is in $\mathcal T$.
\end{proof}

We now prove the proposition:

\begin{proof}[Proof of Proposition \ref{prop:realizes-implies}]
We will prove by induction that for all natural numbers $k$, for all $p$, $\multi{a}$, $\multi{a}'$, $\multi{w}$, if $p[\multi{w} := \multi{\alpha}]$ is in $\pole_{\mathcal T,k}$ for all $\multi{\alpha} \in \{0,1\}$, then the first-order formula $\exists \multi{w}~ (\mathcal C_p \wedge \multi{a} = \multi{a}')$ is false in all Boolean algebras with at least two elements. (This is sufficient to prove the proposition because the sequence $(\pole_{\mathcal T,k})_{k\in\mathbb N}$ is cumulative and the set of all $\multi{\alpha} \in \{0,1\}$ is finite.)

Note that if the formula $\exists \multi{w}~ (\multi{a} = \multi{a}')$ is false in $\{0,1\}$, then it is false in all Boolean algebras, because its negation is equivalent to a Horn clause (Lemma \ref{lemma:horn-ba}). In particular, $\exists \multi{w}~ (\mathcal C_p \wedge \multi{a} = \multi{a}')$ is false in all Boolean algebras with at least two elements. Therefore, from now on, we will assume that $\exists \multi{w}~ (\multi{a} = \multi{a}')$ is true in $\{0,1\}$.

The result is vacuously true for $k=0$, because $\pole_{\mathcal T,0}$ is empty.

Assume the result holds for some $k$, and let $p$, $\multi{a}$, $\multi{a}'$, $\multi{w}$ be such that $p[\multi{w} := \multi{\alpha}]$ is in $\pole_{\mathcal T,k+1}$ for all $\multi{\alpha} \in \{0,1\}$ such that $(\multi{a} = \multi{a}')[\multi{w} := \multi{\alpha}]$ is true.

If we look back at the definition of $\pole_{\mathcal T}$ in section \ref{subsec:definition-pole}, we see that we must be in one of the following cases:

(i) There exists a process $q$ such that $p$ evaluates in one step to $q$. In that case, for all $\multi{\alpha} \in \{0,1\}$, if $(\multi{a} = \multi{a}')[\multi{w} := \multi{\alpha}]$ is true, then $q[\multi{w} := \multi{\alpha}]$ must be in $\pole_{\mathcal T,k}$. Therefore, by the induction hypothesis, the formula $\exists \multi{w}~ (\mathcal C_q \wedge \multi{a} = \multi{a}')$ is false in all Boolean algebras with at least two elements. On the other hand, evaluation can only remove or copy the constants $\gamma_A$, and not add new ones. This means that the formula $\forall \multi{w}~ (\mathcal C_p \limp \mathcal C_q)$ is a propositional tautology, which proves the result.

(ii) The process $p$ is of the form $\gamma_A \st t_1 \sc \ldots \sc t_n \sc \pi$. In that case, let $A$ be decomposed as in section \ref{subsec:structure-formulas}. Then for all $\multi{\alpha}$ in $\{0,1\}$ such that $(\multi{a} = \multi{a}')[\multi{w} := \multi{\alpha}]$ is true, there exists a list $\multi{\multi{\beta}}_{\multi{\alpha}}$ in $\{0,1\}$ such that $(b = b')\left[\multi{w} := \multi{\alpha}, \multi{\multi{y}} := \multi{\multi{\beta}}_{\multi{\alpha}}\right]$ is true and that set of processes
\[\arraycolsep=2pt\left\{\begin{array}{c}
t_i[\multi{w} := \multi{\alpha}] \st \gamma_{C_{i,1}\left[\multi{w} := \multi{\alpha},\multi{\multi{z_i}} := \multi{\multi{\delta_i}}, \multi{\multi{y}} := \multi{\multi{\beta}}_{\multi{\alpha}}\right]} \sc \ldots \sc \pi[\multi{w} := \multi{\alpha}];\\

i \leq m \text{ and } \multi{\multi{\delta_i}} \in \{0,1\} \text{ such that}\\

(c_i = c'_i)\left[\multi{w} := \multi{\alpha}, \multi{\multi{z_i}} := \multi{\multi{\delta_i}}, \multi{\multi{y}} := \multi{\multi{\beta}}_{\multi{\alpha}}\right] \text{ is true}
\end{array}\right\}\]
is included in $\pole_{\mathcal T, k}$.

Every function from $\{0,1\}^k$ to $\{0,1\}$ can be represented by a first-order term with $k$ free variables. Therefore, one can choose a list $\multi{\multi{e}}$ of first-order terms with no free variables other than $\multi{w}$ such that for all $\multi{\alpha}$ in $\{0,1\}$, if $(\multi{a} = \multi{a}')[\multi{w} := \multi{\alpha}]$ is true, then the value of $\multi{\multi{e}}[\multi{w} := \multi{\alpha}]$ (in $\{0,1\}$) is $\multi{\multi{\beta}}_{\multi{\alpha}}$.

Let $i \leq m$. For all $\multi{\alpha},\multi{\multi{\delta_i}}$ in $\{0,1\}$ such that $((\multi{a} = \multi{a}') \wedge (c_i = c_i')[\multi{\multi{y}} := \multi{\multi{e}}])[\multi{w} := \multi{\alpha}, \multi{\multi{z_i}} := \multi{\multi{\delta_i}}]$ is true, we know that the process
\[
(t_i \st \gamma_{C_{i,1}[\multi{\multi{y}} := \multi{\multi{e}}]} \sc \ldots \sc \gamma_{C_{i,n_i}[\multi{\multi{y}} := \multi{\multi{e}}]} \sc \pi)[\multi{w} := \multi{\alpha}, \multi{\multi{z_i}} := \multi{\multi{\delta_i}}]
\]
is in $\pole_{\mathcal T, k}$. Therefore, by the induction hypothesis, we know that the formula
\begin{multline*}
\exists \multi{w}~ \exists \multi{\multi{z}}_i~ (\mathcal C_{t_i} \wedge C_{i,1}[\multi{\multi{y}} := \multi{\multi{e}}] \wedge \ldots \wedge C_{i,n_i}[\multi{\multi{y}} := \multi{\multi{e}}] \\ \wedge \mathcal C_\pi \wedge (\multi{a} = \multi{a}') \wedge (c_i = c_i')[\multi{\multi{y}} := \multi{\multi{e}}])
\end{multline*}
is false in all Boolean algebras with at least two elements. In other words, for all $i \leq m$, the formula
\[
\exists \multi{w}~ (\mathcal C_{t_i} \wedge \mathcal C_\pi \wedge (\multi{a} = \multi{a}') \wedge (B_i[\multi{\multi{y}} := \multi{\multi{e}}] \limp \bot))
\]
is false in all Boolean algebras with at least two elements.

This means that the formula
\begin{multline*}
\exists \multi{w}~ (\mathcal C_{t_1} \wedge \ldots \wedge C_{t_m} \wedge (\multi{a} = \multi{a}') \wedge \mathcal C_\pi \\
\wedge (B_1 \limp \ldots \limp B_m \limp \bot) [\multi{\multi{y}} := \multi{\multi{e}}])
\end{multline*}
is false in all Boolean algebras with at least two elements.

The formula $\forall \multi{w}~ (\multi{a} = \multi{a}' \limp (b = b')[\multi{\multi{y}} := \multi{\multi{e}}])$ is true in $\{0,1\}$. Since it is a Horn clause, it is true in all Boolean algebras with at least two elements (Lemma \ref{lemma:horn-ba}). Therefore the formula 
\begin{multline*}
\exists \multi{w}~ (\mathcal C_{t_1} \wedge \ldots \wedge C_{t_m} \wedge (\multi{a} = \multi{a}') \wedge \mathcal C_\pi \\
\wedge (B_1 \limp \ldots \limp B_m \limp b \neq b') [\multi{\multi{y}} := \multi{\multi{e}}])
\end{multline*}
is false in all Boolean algebras with at least two elements. This formula is a logical consequence of the formula
\[
\exists \multi{w}~ (\mathcal C_{t_1} \wedge \ldots \wedge C_{t_m} \wedge \mathcal C_\pi \wedge (\multi{a} = \multi{a}') \wedge A),
\]
which is itself a logical consequence of the formula
\[
\exists \multi{w}~ (\mathcal C_p \wedge (\multi{a} = \multi{a}')).
\]
Therefore, this last formula is false in all Boolean algebras with at least two elements.
\end{proof}

This completes the proof of Theorem \ref{thm:main}.

\section{Application: sequentialisation in a denotational model of the $\lambda_c$-calculus} \label{section:sequential}

It is known \cite{streicher-reus:98} that, by performing Scott's construction $D_\infty$ with $D_0 = \{ \bot, \top \}$ (the two-elements lattice), one obtains a denotational model of the $\lambda_c$-calculus. As with any such model, one natural question \cite{scott:owhy} is: among its elements, which ones are sequentialisable. In other words, which ones are the denotation of an actual $\lambda_c$-term. We will show how the techniques developed in this paper can give a partial answer.

\subsection{The construction of $D_\infty$}

We recall the construction of $D_\infty$ \cite{barendregt:lambda}. The first step is to define a finite lattice $D_n$ for all natural numbers $n$. We let:
\begin{itemize}
\item $D_0 = \{ \bot, \top \}$ (the two-elements lattice, with $\bot < \top$),
\item $D_{n+1} = [D_n \to D_n]$ (the complete lattice of all Scott-continuous functions from $D_n$ to $D_n$, which is the same as the lattice of all non-decreasing functions, because $D_n$ is finite).
\end{itemize}
Then for all $n$ we define an injection $\varphi_n \in [D_n \to D_{n+1}]$ and a projection $\psi_n \in [D_{n+1} \to D_{n}]$:
\begin{itemize}
\item for all $\alpha \in D_0$, $\varphi_0(\alpha)$ is the function $\beta \mapsto \alpha$,
\item for all $f \in D_1$, $\psi_0(f) = f(\bot)$
\item for all $n \geq 0$ and all $\alpha \in D_{n+1}$, $\varphi_{n+1}(\alpha) = \varphi_n \circ \alpha \circ \psi_n$,
\item for all $n \geq 0$ and all $f \in D_{n+2}$, $\psi_{n+1}(f) = \psi_n \circ f \circ \varphi_n$.
\end{itemize}
Finally, we define $D_\infty$ as the limit of the diagram $(D_n, \psi_n)_{n \in \mathbb N}$ in the category of complete lattices and Scott-continuous functions, namely:
\[ D_\infty =\{ \alpha = (\alpha_{[n]} \in D_n)_{n \in \mathbb N};~ \forall n~ \alpha_{[n]} = \psi_n(\alpha_{[n+1]}) \}. \]
In fact, $D_\infty$ is also a colimit of the diagram $(D_n, \varphi_n)_{n \in \mathbb N}$ \cite{barendregt:lambda}; for all $n$, the injection from $D_n$ into $D_{\infty}$ is given by:
\[ \begin{array}{lll}
\alpha_{[n]} & \mapsto & (\psi_0 \circ \ldots \circ \psi_{n-1} (\alpha_{[n]}), \ldots, \psi_{n-1}(\alpha_{[n]})), \\
& & \alpha_{[n]}, \\
& & \varphi_n({\alpha_{[n]}}), \varphi_{n+1}\circ\varphi_{n}(\alpha_{[n]}), \ldots).
\end{array} \]
As is customary with colimits, we identify each $D_n$ with the corresponding subset of $D_\infty$.

This defines an extensional reflexive object in the category of complete lattices and Scott-continuous functions, because we can define two inverse isomorphisms $\Phi : D_\infty \to [D_\infty \to D_\infty]$ and $\Psi : [D_\infty \to D_\infty] \to D_\infty$:
\[ \begin{array}{rcl}
\Phi((\alpha_{[n]})_{n \in \mathbb N}) & = & (\beta{[n]})_{n \in \mathbb N} \mapsto (\alpha_{[n+1]}(\beta_{[n]}))_{n \in \mathbb N}\\
\Psi(f) & = & (\gamma_{[n]})_{n\in\mathbb N}\text{, where} \\
&& \gamma_{[0]} = f(\bot)_{[0]} \in D_0, \\
&& \gamma_{[n+1]} = (\alpha_{[n]} \mapsto f(\alpha_{[n]})_{[n]}) \in D_{n+1}.
\end{array} \]

\paragraph*{A model of the $\lambda_c$-calculus}

It is known \cite{streicher-reus:98} that $D_\infty$ can be equipped with the structure of a model of the $\lambda_c$-calculus. One way to present this structure is as follows:
for each $\lambda_c$-term $t$ and each list $x_1, \ldots, x_n$ of pairwise distinct variables containing at least all the free variables of $t$, define a Scott-continuous function $\ds t : D_\infty^n \to D_\infty$:
\[ \arraycolsep=2pt \begin{array}{rll}
\ds{x_k}(\alpha_1, \ldots, \alpha_n) & = & \alpha_k \\
\ds{\lambda x_{n+1}.\, t}(\alpha_1, \ldots, \alpha_n) & = & \Psi(\alpha_{n+1} \mapsto \ds{t}(\alpha_1, \ldots, \alpha_{n+1})) \\
\ds{t u}(\multi \alpha) & = & \Phi(\ds{t}(\multi \alpha))(\ds{u}(\multi \alpha)) \\
\ds{\instrU_m}(\multi \alpha) = \ds{\instrR_m}(\multi \alpha) & = & \bot\\
\ds{k_{t_1\cdot\ldots\cdot t_m\cdot\omega}}(\multi \alpha) &=& \ds{\lambda y.\, y\,t_1\ldots t_n}(\multi \alpha)\\
\ds{\cc}(\multi \alpha) & = & \bigvee_{\beta,\gamma \in D_{\infty}} ((\beta \to \gamma) \to \beta) \to \beta, \\
&& \text{where }\delta \to \varepsilon \text{ is the least}\\
&& \text{element of } D_\infty\text{such that}\\
&& \Phi(\delta \to \varepsilon)(\delta) \geq \varepsilon. 
\end{array} \]

This structure is compatible with evaluation in the Krivine abstract machine \cite{streicher-reus:98}: for all closed $\lambda_c$-terms $t,t',u_1\ldots, u_n,\allowbreak u'_1\ldots, u'_{n'}$, if \[ t \st u_1 \cdot \ldots u_n \cdot \omega \succ  t' \st u'_1 \cdot \ldots u'_{n'} \cdot \omega, \] then $\ds{t\,u_1\ldots u_n} = \ds{t'\,u'_1\ldots u'_{n'}}$.

In addition, this model charaterises solvability \cite{streicher-reus:98}. Namely, for each closed term $t$, we have $\ds t > \bot$ if and only if there exist $k \leq n \in \mathbb N$ such that for each stack $u_1 \cdot \ldots \cdot u_n \cdot \pi$, there exists a stack $\pi'$ such that \[ t \st u_1 \cdot \ldots \cdot u_n \cdot \pi \succ u_k \st \pi'. \]

\subsection{Sequentialisation}

In this context, the problem of sequentialisation can be formulated as follows: given $\alpha \in D_\infty$, is there a closed $\lambda_c$-term $t$ such that $\ds t = \alpha$? We will show how the techniques developed in this paper can answer a simplified version of this problem. Namely, whenever $\alpha$ is in $D_n$ for some finite $n$, we give a necessary and sufficient condition for the existence of a closed $\lambda_c$-term $t$ such that $\ds t \geq \alpha$.

\begin{remark} Alternatively, one might also ask whether there exists a \emph{proof-like} term $t$ such that $\ds t \geq \alpha$. However, due to how $\ds{\instrR_m}$ and $\ds{k_\pi}$ are defined, we have that for each closed $\lambda_c$-term $t$, there exists a proof-like term $t'$ such that $\ds t = \ds{t'}$, so that is in fact the same question.
\end{remark}

\begin{remark} \label{lemma:d-fin-induction} One can prove that the set $\bigcup_{n\in\mathbb N} D_n$ is in fact the least subset of $D_\infty$ that contains $\bot$ and $\top$ and is closed under the binary operations $\vee$ and $\to$ (where $\delta \to \varepsilon$ is defined as the least element of $D_\infty$ such that $\Phi(\delta \to \varepsilon)(\delta) \geq \varepsilon$).
\end{remark}

\paragraph*{Interpreting first-order formulas in $D_\infty$}

For each closed first-order formula $A$, we can define an element $\ds A \in D_\infty$:
\begin{itemize}
\item $\ds{a \neq b} = \left\{ \begin{array}{ll}
\bot & \text{if } a\neq b \text{ is true,}\\
\top & \text{if } a\neq b \text{ is false}
\end{array}\right.$
\item $\ds{A \to B} = \ds A \to \ds B$, 
\item $\ds{\forall x\, A} = \ds A(0) \vee \ds A(1)$
\end{itemize}

In fact, for each $A$, there exists $n$ such that $\ds A$ is in $D_n$. Conversely, thanks to Remark \ref{lemma:d-fin-induction}, for all $n$ and all $\alpha \in D_n$, one can construct inductively a closed formula $\Theta_\alpha$ such that $\ds {\Theta_\alpha} = \alpha$.

\paragraph*{True formulas give sequentialisable elements}

These two translations, from $\lambda_c$-terms and first-order formulas into $D_\infty$, are linked by a variant of the adequacy lemma (Lemma \ref{lemma:adequacy}), which can be proved using the same standard techniques:

\begin{lemma}Let $x_1 \cln A_1\cldotsc x_n \cln A_n \vdash t \cln B$ be a valid typing judgement (with $A_1\cldotsc A_n$ closed), and let $u_1\cldotsc u_n$ be closed $\lambda_c$-terms. If $\ds {u_1}\geq \ds{A_1}\cldotsc \ds {u_n}\geq \ds{A_n}$, then $\ds t(\ds {u_1}\cldotsc \ds {u_n}) \geq \ds B$.
\end{lemma}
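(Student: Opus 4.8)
The plan is to mimic the proof of the ordinary adequacy lemma (Lemma~\ref{lemma:adequacy}), carrying out an induction on the derivation of the typing judgement $x_1\cln A_1\cldotsc x_n\cln A_n\vdash t\cln B$, but with the key change that the notion ``$u$ realizes $A$'' is replaced throughout by ``$\ds u\geq \ds A$''. So the statement to prove by induction is: for every valid judgement $\Gamma\vdash t\cln B$ and every tuple $\multi u$ with $\ds{u_i}\geq\ds{A_i}$, we have $\ds t(\ds{u_1}\cldotsc\ds{u_n})\geq\ds B$. Before starting the induction I would record a few monotonicity and compatibility facts: that $\ds{-}$ is Scott-continuous hence monotone in each argument; that $\Phi$ and $\Psi$ are monotone (being components of an order-isomorphism); and the defining adjunction-style inequality $\Phi(\delta\to\varepsilon)(\delta)\geq\varepsilon$ together with the fact that $\delta\to\varepsilon$ is the \emph{least} such element, which gives the Galois-connection property $\gamma\geq\delta\to\varepsilon \iff \Phi(\gamma)(\delta)\geq\varepsilon$.

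The interesting cases are exactly those of the original adequacy proof. For the $\limp$-introduction rule, from $\ds t(\ds{u_1}\cldotsc\ds{u_n},\beta)\geq\ds{B}$ for all $\beta\geq\ds A$, I need $\ds{\lambda x.t}(\ds{\multi u})=\Psi(\beta\mapsto\ds t(\ds{\multi u},\beta))\geq\ds{A\limp B}=\ds A\to\ds B$; by the Galois property this amounts to $\Phi(\Psi(\beta\mapsto\ds t(\ds{\multi u},\beta)))(\ds A)\geq\ds B$, and since $\Phi\circ\Psi$ is the identity this is $\ds t(\ds{\multi u},\ds A)\geq\ds B$, which holds by the induction hypothesis applied with $\beta=\ds A$ (using monotonicity of $\ds t$ in its last argument for other $\beta$, though only $\beta=\ds A$ is needed here). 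For $\limp$-elimination, $\ds{tu}(\multi\alpha)=\Phi(\ds t(\multi\alpha))(\ds u(\multi\alpha))$, and from $\ds t(\multi\alpha)\geq\ds A\to\ds B$ and $\ds u(\multi\alpha)\geq\ds A$ monotonicity of $\Phi$ and of application plus the Galois inequality give $\Phi(\ds t(\multi\alpha))(\ds u(\multi\alpha))\geq\Phi(\ds A\to\ds B)(\ds A)\geq\ds B$. The axiom rule is immediate from the hypothesis $\ds{u_k}\geq\ds{A_k}$. The $\forall$ rules use that $\ds{\forall z\,A}=\ds A(0)\vee\ds A(1)$ and that $\ds{A[z:=b]}$ with $b\in\{0,1\}$ equals the corresponding component — together with the fact that $\ds{-}$ of a formula only depends on the Boolean values $0,1$ substituted for variables; $\forall$-introduction needs that $\ds B$ does not depend on $z$ when $z$ is not free, and $\forall$-elimination needs $\ds A(0)\vee\ds A(1)\geq\ds A(i)$ for $i\in\{0,1\}$, i.e. the instance $b\in\{0,1\}$; for a general term $b$ one reduces to its value in $\{0,1\}$. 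For the two equality rules one checks directly from the definition of $\ds{a\neq b}$ (which is $\bot$ or $\top$ according to truth in $\{0,1\}$): the rule deriving $\Gamma\vdash t\cln A$ from $\Gamma\vdash t\cln a\neq a$ is handled because $\ds{a\neq a}=\top$ forces the context to be ``inconsistent'' at the level of $D_\infty$, i.e. one of the hypotheses $\ds{u_i}\geq\top$ forces $\ds t(\ds{\multi u})=\top\geq\ds B$; here I would need the small observation that $\ds t$ with all arguments $\geq\top$ (hence $=\top$) returns $\top$, which follows because $\Phi,\Psi$ preserve $\top$ and $\ds{\cc}$'s value at $\top$-arguments is $\top$. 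The $\cc$ case uses that $\ds\cc(\multi\alpha)$ is defined as the supremum over all $\beta,\gamma$ of $((\beta\to\gamma)\to\beta)\to\beta$, so in particular $\ds\cc(\multi\alpha)\geq((\ds A\to\ds B)\to\ds A)\to\ds A=\ds{((A\limp B)\limp A)\limp A}$, which is exactly what is required.

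The main obstacle, such as it is, is bookkeeping rather than mathematics: making precise the claim that $\ds{-}$ applied to a formula only sees the Boolean substitution and that the various components $D_n\subseteq D_\infty$ behave well under $\Phi,\Psi$, so that the $\forall$ rules and the equality rules go through cleanly; and checking that every $D_\infty$-operation in play ($\Phi$, $\Psi$, binary join, the $\to$ operation, and $\ds\cc$) is monotone and preserves $\top$. Once those lemmas are isolated, each typing rule is dispatched by one or two lines exactly as above. I would therefore present the proof as: (1) a short list of monotonicity/top-preservation/Galois facts about $D_\infty$; (2) the induction on the typing derivation, spelling out the $\limp$, $\forall$, axiom, $\cc$, and equality cases as sketched.
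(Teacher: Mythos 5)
The paper itself does not spell out a proof of this lemma: it simply notes that it ``can be proved using the same standard techniques'' as the adequacy lemma, i.e.\ an induction on the typing derivation with realizability replaced by the order relation in $D_\infty$. Your plan is exactly that, and your treatment of the key cases ($\limp$-introduction via the Galois property of $\to$ and $\Phi\circ\Psi=\mathrm{id}$, $\limp$-elimination via monotonicity, the axiom, $\cc$ via the supremum defining $\ds\cc$, $\forall$-elimination via $\ds{\forall z\,A}\geq\ds{A[z:=i]}$, and the strengthening to open formulas by quantifying over Boolean valuations) is correct and is what the author intends.

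Two points need repair, one of them a genuine error. For the rule deriving $\Gamma\vdash t\cln A$ from $\Gamma\vdash t\cln a\neq a$, your argument is both roundabout and wrong: it is not true that the premise forces some hypothesis to satisfy $\ds{u_i}\geq\top$ (take $\Gamma=\{x\cln B\limp\bot,\ y\cln B\}$ with $\ds B$ strictly between $\bot$ and $\top$, and $t=xy$), and your auxiliary claim that $\ds t$ returns $\top$ whenever all its arguments are $\top$ is false (e.g.\ $\ds{\lambda x.\,\lambda y.\,y}(\top)=\ds{\lambda y.\,y}\neq\top$, since $\Psi$ of a non-top function is not $\top$). Fortunately no such detour is needed: the induction hypothesis applied to the premise already gives $\ds t(\ds{u_1}\cldotsc\ds{u_n})\geq\ds{a\neq a}=\top$, hence $\ds t(\ds{u_1}\cldotsc\ds{u_n})=\top\geq\ds A$ outright. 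Separately, in the $\forall$-introduction case the relevant side condition is that $z$ is not free in the \emph{context} $\Gamma$ (so the same hypotheses $\ds{u_i}\geq\ds{A_i}$ serve for both values of $z$, and one concludes by taking the join $\ds{A[z:=0]}\vee\ds{A[z:=1]}$); saying that ``$\ds B$ does not depend on $z$'' is not the point. With these corrections your proof goes through and coincides with the intended one; the Leibniz-style rule is handled, as you suggest, by splitting on whether the valuation makes $a$ and $b$ equal (in which case the substituted contexts have the same denotations) or not (in which case $\ds{a\neq b}=\bot$ and the conclusion is trivial).
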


In addition, a variant of Lemma \ref{lemma:eq-diff-real} also holds in $D_\infty$ (with essentially the same proof):

\begin{lemma} Let $A$ be a closed first-order formula that is true in the Boolean algebra $\{0,1\}$.
\begin{itemize}
\item If $A$ is of the form $\forall \multi{z}~ a \neq b$, $\ds A = \bot \leq \ds t$ for all closed $\lambda_c$-terms $t$;
\item If $A$ is of the form $\forall \multi{z}~ a = b$, then $\ds A = (\top \to \top) \leq \ds {\lambda x.\, x}$.
\end{itemize}
\end{lemma}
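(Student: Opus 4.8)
The statement to prove is the $D_\infty$-analogue of Lemma \ref{lemma:eq-diff-real}, so the plan is to mimic that proof almost verbatim, replacing ``belongs to a pole'' by ``$\leq \ds t$''. The key input is that for any closed first-order formula $A$, its falsity/truth-value behaviour is mirrored by the value $\ds A$: when $A$ is a (universally quantified) non-equation true in $\{0,1\}$, every instantiation is of the form $0 \neq 0$ or similar, so $\ds{a\neq b}=\bot$ for each choice of the quantified variables, and then the clause $\ds{\forall z\,A}=\ds A(0)\vee\ds A(1)$ gives, by an easy induction on the list $\multi z$, that $\ds A=\bot$; since $\bot$ is the least element of $D_\infty$, $\bot\leq\ds t$ for all closed $t$, proving the first bullet.

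For the second bullet, I would again induct on $\multi z$. When $\multi z$ is empty, $A$ is a genuine equation $a=b$ true in $\{0,1\}$, which unfolds to $(a\neq b)\limp\bot$; since $a\neq b$ is false, $\ds{a\neq b}=\top$, so $\ds A=\top\to\ds\bot=\top\to\top$ (using $\ds\bot=\ds{0\neq 0}=\top$). For the inductive step, $\ds{\forall z\,A'}=\ds{A'}(0)\vee\ds{A'}(1)$; by the induction hypothesis each disjunct equals $\top\to\top$ (the instantiations are again true equations in $\{0,1\}$), and $(\top\to\top)\vee(\top\to\top)=\top\to\top$, so $\ds A=\top\to\top$. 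It then remains to check $\top\to\top\leq\ds{\lambda x.\,x}$. By definition $\ds{\lambda x.\,x}=\Psi(\alpha\mapsto\alpha)$, i.e. the identity function of $D_\infty$ viewed as an element; and $\top\to\top$ is by definition the least $\delta$ with $\Phi(\delta)(\top)\geq\top$. Since $\Phi(\ds{\lambda x.\,x})=\mathrm{id}_{D_\infty}$ satisfies $\mathrm{id}(\top)=\top\geq\top$, minimality of $\top\to\top$ gives $\top\to\top\leq\ds{\lambda x.\,x}$, which finishes the proof.

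The only genuinely non-routine point — and hence the step I would flag as the main obstacle — is verifying that $\top\to\top\leq\ds{\lambda x.\,x}$, because it requires unwinding the somewhat opaque definitions of $\Phi$, $\Psi$ and the arrow operation $\delta\to\varepsilon$ on $D_\infty$ and checking they interact as expected; everything else is a straightforward induction on the quantifier prefix combined with the base cases $\ds{a\neq b}\in\{\bot,\top\}$. A minor subtlety worth a sentence in the write-up is that the notation $\ds A(0)$, $\ds A(1)$ in the clause for $\forall$ presupposes $\ds A$ is being fed the interpretations of the closed terms $0,1$; since $\ds 0 = \ds 1 = $ the relevant elements of $D_0\subseteq D_\infty$, this is consistent with the inductive reading, so the induction goes through cleanly.
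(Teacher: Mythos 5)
Your proof is correct and takes essentially the approach the paper intends: the paper merely asserts that this lemma holds ``with essentially the same proof'' as Lemma~\ref{lemma:eq-diff-real}, and your argument is exactly that adaptation, with the only substantive extra step being the check $\top\to\top\leq\ds{\lambda x.\,x}$, which you settle correctly via $\Phi(\Psi(\mathrm{id}))=\mathrm{id}$ and the minimality in the definition of $\delta\to\varepsilon$. One cosmetic remark: the clause $\ds{\forall z\,A}=\ds A(0)\vee\ds A(1)$ is just shorthand for $\ds{A[z:=0]}\vee\ds{A[z:=1]}$ (the constants $0,1$ are Boolean terms substituted into $A$, not elements of $D_0$), which is in fact how you use it, so your parenthetical about $\ds 0$ and $\ds 1$ is unnecessary rather than wrong.
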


As a result, for each closed first-order formula $A$, if $A$ is true in all Boolean algebras with at least two elements, then there exists a (proof-like) closed $\lambda_c$-term $t$ such that $\ds t \geq \ds A$.

\paragraph*{Sequentialisation gives universal realizers}

Given two closed first-order formulas $A$ and $B$, if $\ds{A} \geq \ds{B}$, then for all poles $\pole$, we have $\falsity{A}_{\pole} \supseteq \falsity{B}_{\pole}$. This can be proved by induction on the pair $(\min(h_A, h_B), \max(h_A, h_B))$ (where $h_A$ and $h_B$ denote the heights of $A$ and $B$ respectively), using the decomposition from section \ref{subsec:structure-formulas} (the single-level version).

In addition, for each closed $\lambda_c$-term $t$ and each closed first-order formula $A$, one can prove by induction on the structure of $t$ that if $\ds t \geq \ds A$, then $t$ realizes $A$ universally. More precisely, for each pole $\pole$ and each $\lambda_c$-term $t$ with free variables $\multi{x}$, one can prove by induction on $t$ that for all closed formulas $A, \multi{B}$, if $\ds{t}(\ds{\multi{B}}) \geq \ds{A}$, then for all $\multi{s}$ that realize $\multi{B}$, $t[\multi{x} := \multi{s}]$ realizes $A$. The proof has much in common with the proof of the adequacy lemma, but it relies heavily on the previous paragraph.

Thanks to Theorem \ref{thm:main} (and Remark \ref{lemma:d-fin-induction}), this means that for all closed first-order formulas $A$, if there exists a closed $\lambda_c$-term $t$ such that $\ds t \geq \ds A$, then $A$ is true in all Boolean algebras with at least two elements.

Combining all these results, we get:

\begin{proposition} Let $n\in\mathbb N$ and $\alpha \in D_n$. The following two statements are equivalent:
\begin{itemize}
\item There exists a closed $\lambda_c$-term $t$ such that $\ds t \geq \alpha$,
\item The formula $\Theta_\alpha$ is true in all Boolean algebras with at least two elements (where $\Theta_\alpha$ is any closed first-order formula such that $\ds {\Theta_\alpha} = \alpha$. Such a $\Theta_\alpha$ does exist for all $\alpha$ and it can be obtained effectively. The choice of $\Theta_\alpha$ does not matter).
\end{itemize}
\end{proposition}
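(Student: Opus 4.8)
The plan is to read off the proposition as a direct combination of the two ``transfer'' facts established in this section, instantiated at $A = \Theta_\alpha$. First I would pin down the existence of $\Theta_\alpha$. By Remark \ref{lemma:d-fin-induction}, an element $\alpha$ of some $D_n$ lies in the least subset of $D_\infty$ containing $\bot,\top$ and closed under $\vee$ and $\to$. Since $\bot = \ds{0 \neq 1}$ and $\top = \ds{0 \neq 0}$, the operation $\to$ is mirrored by the connective $\to$ (via $\ds{A \to B} = \ds A \to \ds B$), and a binary join is expressible with $\forall$ (via $\ds{\forall x\, A} = \ds A(0) \vee \ds A(1)$), one builds by induction along this generation a closed first-order formula $\Theta_\alpha$ with $\ds{\Theta_\alpha} = \alpha$, and the construction is effective. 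Fix one such $\Theta_\alpha$.

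For the implication ``$\Theta_\alpha$ is true in all Boolean algebras with at least two elements $\Rightarrow$ there exists a closed $\lambda_c$-term $t$ with $\ds t \geq \alpha$'', I would invoke the consequence drawn in the paragraph ``True formulas give sequentialisable elements'': if a closed first-order formula $A$ is true in all Boolean algebras with at least two elements, then some (proof-like) closed $\lambda_c$-term $t$ satisfies $\ds t \geq \ds A$. Applying this to $A = \Theta_\alpha$ and using $\ds{\Theta_\alpha} = \alpha$ yields $\ds t \geq \alpha$.

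For the converse, I would use the conclusion of the paragraph ``Sequentialisation gives universal realizers'': if $\ds t \geq \ds A$, then $t$ realizes $A$ universally, hence, by the ``in particular'' clause of Theorem \ref{thm:main} (a formula is universally realized iff it holds in every Boolean algebra with at least two elements), $A$ is true in all such algebras. Applied to $A = \Theta_\alpha$ with $\ds{\Theta_\alpha} = \alpha$, any $t$ with $\ds t \geq \alpha$ forces $\Theta_\alpha$ to be true in all Boolean algebras with at least two elements. The remark that the choice of $\Theta_\alpha$ is immaterial then needs no separate argument: both clauses of the equivalence have just been shown equivalent to the $\Theta_\alpha$-free statement ``there exists a closed $\lambda_c$-term $t$ with $\ds t \geq \alpha$''.

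The proposition itself is thus a one-line assembly; the real content lives in the two transfer facts it rests on. I expect the main obstacle to be the second of these, the induction on the structure of $t$ showing $\ds t \geq \ds A \Rightarrow t$ realizes $A$ universally: its application case relies on the monotonicity $\ds A \geq \ds B \Rightarrow \falsity{A}_\pole \supseteq \falsity{B}_\pole$ (itself a double induction on the heights of $A$ and $B$ via the single-level decomposition of section \ref{subsec:structure-formulas}), and the instructions $\cc$, $k_\pi$, $\instrU_m$, $\instrR_m$ must each be checked against their interpretations in $D_\infty$. Confirming that the finite levels $\bigcup_n D_n$ are captured by the translation $\ds{\cdot}$, so that $\Theta_\alpha$ exists and is effectively obtainable, is the other point requiring care.
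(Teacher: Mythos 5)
Your proposal is correct and follows essentially the same route as the paper, whose proof of this proposition is literally the assembly (``Combining all these results'') of the two transfer facts of Section \ref{section:sequential} instantiated at $A=\Theta_\alpha$, together with the effective construction of $\Theta_\alpha$ justified by Remark \ref{lemma:d-fin-induction}. The one subtlety you leave implicit --- that a term $t$ with $\ds t \geq \ds{\Theta_\alpha}$ need not be proof-like, which matters when invoking the theory-membership direction behind Theorem \ref{thm:main} --- is handled in the paper by the remark that every closed $\lambda_c$-term has a proof-like term with the same denotation, and is in any case subsumed by the paragraph conclusion you cite.
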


\section{Concluding remarks}

We have proved that the only \emph{first-order} formulas that are true in the characteristic Boolean algebra ($\gimel 2$) of every classical realizability model are those that are true in all Boolean algebras with at least two elements. In a sense, as far as the first order is concerned, the only thing we always know about $\gimel 2$ is that it is a Boolean algebra with at least two elements. This does not extend to the second order: indeed, for example, Krivine \cite{krivine:ra4} has proved that there always exists an ultrafilter on $\gimel 2$, even though the axiom of choice does not necessarily hold in a realizability model. This raises the question: what are the second- and higher-order properties of $\gimel 2$ that are true in all realizability models? And what about $\gimel \mathbb N$?

In a different direction, it would be interesting to know if and to what extent the technique presented in section \ref{section:sequential} can be adapted to other denotational models of the lambda-calculus, and notably to non-lattice and non-continuations-based models.

% conference papers do not normally have an appendix

\balance

% references section

%%% -*-BibTeX-*-
%%% Do NOT edit. File created by BibTeX with style
%%% ACM-Reference-Format-Journals [18-Jan-2012].

% that's all folks

\begin{thebibliography}{12}

%%% ====================================================================
%%% NOTE TO THE USER: you can override these defaults by providing
%%% customized versions of any of these macros before the \bibliography
%%% command.  Each of them MUST provide its own final punctuation,
%%% except for \shownote{}, \showDOI{}, and \showURL{}.  The latter two
%%% do not use final punctuation, in order to avoid confusing it with
%%% the Web address.
%%%
%%% To suppress output of a particular field, define its macro to expand
%%% to an empty string, or better, \unskip, like this:
%%%
%%% \newcommand{\showDOI}[1]{\unskip}   % LaTeX syntax
%%%
%%% \def \showDOI #1{\unskip}           % plain TeX syntax
%%%
%%% ====================================================================

\ifx \showCODEN    \undefined \def \showCODEN     #1{\unskip}     \fi
\ifx \showDOI      \undefined \def \showDOI       #1{#1}\fi
\ifx \showISBNx    \undefined \def \showISBNx     #1{\unskip}     \fi
\ifx \showISBNxiii \undefined \def \showISBNxiii  #1{\unskip}     \fi
\ifx \showISSN     \undefined \def \showISSN      #1{\unskip}     \fi
\ifx \showLCCN     \undefined \def \showLCCN      #1{\unskip}     \fi
\ifx \shownote     \undefined \def \shownote      #1{#1}          \fi
\ifx \showarticletitle \undefined \def \showarticletitle #1{#1}   \fi
\ifx \showURL      \undefined \def \showURL       {\relax}        \fi
% The following commands are used for tagged output and should be
% invisible to TeX
\providecommand\bibfield[2]{#2}
\providecommand\bibinfo[2]{#2}
\providecommand\natexlab[1]{#1}
\providecommand\showeprint[2][]{arXiv:#2}

\bibitem[Barendregt(1985)]%
        {barendregt:lambda}
\bibfield{author}{\bibinfo{person}{Hendrik~Pieter Barendregt}.}
  \bibinfo{year}{1985}\natexlab{}.
\newblock \bibinfo{booktitle}{\emph{The lambda calculus - its syntax and
  semantics}}. \bibinfo{series}{Studies in logic and the foundations of
  mathematics}, Vol.~\bibinfo{volume}{103}.
\newblock \bibinfo{publisher}{North-Holland}.
\newblock
\showISBNx{978-0-444-86748-3}


\bibitem[Burris and Sankappanavar(1981)]%
        {bss:universal-algebra}
\bibfield{author}{\bibinfo{person}{Stanley Burris} {and} \bibinfo{person}{H.~P.
  Sankappanavar}.} \bibinfo{year}{1981}\natexlab{}.
\newblock \bibinfo{booktitle}{\emph{A Course in Universal Algebra}}.
\newblock \bibinfo{publisher}{Springer}.
\newblock


\bibitem[Geoffroy(2018)]%
        {geoffroy:nondeterminism}
\bibfield{author}{\bibinfo{person}{Guillaume Geoffroy}.}
  \bibinfo{year}{2018}\natexlab{}.
\newblock \showarticletitle{Classical realizability as a classifier for
  nondeterminism}.
\newblock \bibinfo{journal}{\emph{Proceedings of the 33rd Annual ACM/IEEE
  Symposium on Logic in Computer Science}} (\bibinfo{date}{Jul}
  \bibinfo{year}{2018}).
\newblock
\urldef\tempurl%
\url{https://doi.org/10.1145/3209108.3209140}
\showDOI{\tempurl}


\bibitem[Griffin(1990)]%
        {griffin:callcc}
\bibfield{author}{\bibinfo{person}{Timothy~G. Griffin}.}
  \bibinfo{year}{1990}\natexlab{}.
\newblock \showarticletitle{A Formulae-as-type Notion of Control}. In
  \bibinfo{booktitle}{\emph{Proceedings of the 17th ACM SIGPLAN-SIGACT
  Symposium on Principles of Programming Languages}} (San Francisco,
  California, USA) \emph{(\bibinfo{series}{POPL '90})}.
  \bibinfo{publisher}{ACM}, \bibinfo{address}{New York, NY, USA},
  \bibinfo{pages}{47--58}.
\newblock
\showISBNx{0-89791-343-4}
\urldef\tempurl%
\url{https://doi.org/10.1145/96709.96714}
\showDOI{\tempurl}


\bibitem[Krivine(2011)]%
        {krivine:ra1}
\bibfield{author}{\bibinfo{person}{Jean{-}Louis Krivine}.}
  \bibinfo{year}{2011}\natexlab{}.
\newblock \showarticletitle{Realizability algebras: a program to well order
  {R}}.
\newblock \bibinfo{journal}{\emph{Logical Methods in Computer Science}}
  \bibinfo{volume}{7}, \bibinfo{number}{3:02} (\bibinfo{year}{2011}),
  \bibinfo{pages}{1--47}.
\newblock
\urldef\tempurl%
\url{https://doi.org/10.2168/LMCS-7(3:2)2011}
\showDOI{\tempurl}


\bibitem[Krivine(2015)]%
        {krivine:ra4}
\bibfield{author}{\bibinfo{person}{Jean{-}Louis Krivine}.}
  \bibinfo{year}{2015}\natexlab{}.
\newblock \showarticletitle{On the structure of classical realizability models
  of {ZF}}. In \bibinfo{booktitle}{\emph{Proceedings TYPES 2014 - LIPIcs}},
  Vol.~\bibinfo{volume}{39}. \bibinfo{pages}{146--161}.
\newblock
\urldef\tempurl%
\url{http://arxiv.org/abs/1408.1868}
\showURL{%
\tempurl}


\bibitem[Krivine(2003)]%
        {krivine:depchoice}
\bibfield{author}{\bibinfo{person}{Jean-Louis Krivine}.}
  \bibinfo{year}{2003}\natexlab{}.
\newblock \showarticletitle{Dependent Choice, \lq Quote\rq{} and the Clock}.
\newblock \bibinfo{journal}{\emph{Theor. Comput. Sci.}} \bibinfo{volume}{308},
  \bibinfo{number}{1-3} (\bibinfo{date}{Nov.} \bibinfo{year}{2003}),
  \bibinfo{pages}{259--276}.
\newblock
\showISSN{0304-3975}
\urldef\tempurl%
\url{https://doi.org/10.1016/S0304-3975(02)00776-4}
\showDOI{\tempurl}


\bibitem[Krivine(2012)]%
        {krivine:ra2}
\bibfield{author}{\bibinfo{person}{Jean-Louis Krivine}.}
  \bibinfo{year}{2012}\natexlab{}.
\newblock \showarticletitle{{Realizability algebras II : new models of ZF +
  DC}}.
\newblock \bibinfo{journal}{\emph{{Logical Methods in Computer Science}}}
  \bibinfo{volume}{8}, \bibinfo{number}{1:10} (\bibinfo{date}{Feb.}
  \bibinfo{year}{2012}), \bibinfo{pages}{1--28}.
\newblock
\urldef\tempurl%
\url{https://doi.org/10.2168/LMCS-8(1:10)2012}
\showDOI{\tempurl}


\bibitem[Krivine(2021)]%
        {krivine:model-full-choice}
\bibfield{author}{\bibinfo{person}{Jean-Louis Krivine}.}
  \bibinfo{year}{2021}\natexlab{}.
\newblock \showarticletitle{A program for the full axiom of choice}.
\newblock \bibinfo{journal}{\emph{Logical Methods in Computer Science}}
  \bibinfo{volume}{Volume 17, Issue 3} (\bibinfo{date}{Sep}
  \bibinfo{year}{2021}).
\newblock
\showISSN{1860-5974}
\urldef\tempurl%
\url{https://doi.org/10.46298/lmcs-17(3:21)2021}
\showDOI{\tempurl}


\bibitem[Miquel(2011)]%
        {miquel:lics11}
\bibfield{author}{\bibinfo{person}{Alexandre Miquel}.}
  \bibinfo{year}{2011}\natexlab{}.
\newblock \showarticletitle{{Forcing as a Program Transformation}}. In
  \bibinfo{booktitle}{\emph{{Proceedings of the 26th Annual IEEE Symposium on
  Logic in Computer Science, LICS 2011}}}. \bibinfo{publisher}{{IEEE Computer
  Society}}, \bibinfo{address}{Toronto, Canada}, \bibinfo{pages}{197--206}.
\newblock
\urldef\tempurl%
\url{https://hal.archives-ouvertes.fr/hal-00800558}
\showURL{%
\tempurl}


\bibitem[Scott(1969)]%
        {scott:owhy}
\bibfield{author}{\bibinfo{person}{Dana~S. Scott}.} \bibinfo{year}{1993, first
  written 1969}\natexlab{}.
\newblock \showarticletitle{A Type-Theoretical Alternative to ISWIM, CUCH,
  OWHY.}
\newblock \bibinfo{journal}{\emph{Theor. Comput. Sci.}} \bibinfo{volume}{121},
  \bibinfo{number}{1\&2} (\bibinfo{year}{1993, first written 1969}),
  \bibinfo{pages}{411--440}.
\newblock
\urldef\tempurl%
\url{http://dblp.uni-trier.de/db/journals/tcs/tcs121.html#Scott93}
\showURL{%
\tempurl}


\bibitem[Streicher and Reus(1998)]%
        {streicher-reus:98}
\bibfield{author}{\bibinfo{person}{Th. Streicher} {and} \bibinfo{person}{B.
  Reus}.} \bibinfo{year}{1998}\natexlab{}.
\newblock \showarticletitle{Classical logic, continuation semantics and
  abstract machines}.
\newblock \bibinfo{journal}{\emph{Journal of Functional Programming}}
  \bibinfo{volume}{8}, \bibinfo{number}{6} (\bibinfo{year}{1998}),
  \bibinfo{pages}{543–572}.
\newblock
\urldef\tempurl%
\url{https://doi.org/10.1017/S0956796898003141}
\showDOI{\tempurl}


\end{thebibliography}
\end{document}